\newcommand{\aeval}{\textsc{AE-VAL}\xspace}
\newcommand{\jkind}{\textsc{JKind}\xspace}
\newcommand{\jkindsynt}{\textsc{JSyn}\xspace} % working name, feel free to
\newcommand{\lustrev}{\textsc{LustreV6}\xspace}
\newcommand{\isUnSat}{\textsc{isUnsat}\xspace}
\newcounter{template}
\newenvironment{template}[1][htb]
  {
   \begin{algorithm2e}[#1]%
   \SetAlgorithmName{Template}
  }{\end{algorithm2e}}
\begin{document}

\title{Synthesis from Assume-Guarantee Contracts using Skolemized Proofs of Realizability}

\author{Andreas Katis\inst{1}, Grigory Fedyukovich\inst{2},
  Andrew Gacek\inst{3}, John Backes\inst{3},\\ Arie Gurfinkel\inst{4}, Michael
  W. Whalen\inst{1}}%

\institute{
Department of Computer Science and Engineering, University of Minnesota\\
\email{katis001@umn.edu, whalen@cs.umn.edu}
\and
University of Washington Paul G. Allen School of Computer Science \& Engineering\\
\email{grigory@cs.washington.edu}
\and
Rockwell Collins Advanced Technology Center\\
\email{\{andrew.gacek,john.backes\}@rockwellcollins.com}
\and
Department of Electrical and Computer Engineering, University of Waterloo\\
\email{agurfinkel@uwaterloo.ca}
}
\maketitle

%\grigory{Do you really need physical addresses here? I believe, it's totally
% fine to remove them}

\pagestyle{plain}
%\andreas{Maybe change title to ``Synthesis from Safety Specifications using
%K-Inductive Proofs of Realizability''}
\begin{abstract}
The realizability problem in requirements engineering is to determine the
existence of an implementation that meets the given formal requirements.
%Due to the potential conflicts between requirements, many user-defined specifications are proven to be unrealizable.
A step forward after realizability is proven, is to construct such an
implementation automatically, and thus solve the problem of program synthesis.
In this paper, we propose a novel approach to
program synthesis guided by k-inductive proofs of
realizability of assume-guarantee contracts constructed from safety properties. 
The proof of realizability is performed over a set of $\forall\exists$-formulas, and synthesis is performed by extracting Skolem functions witnessing the existential quantification. These Skolem functions can then be combined into an implementation.
Our approach is implemented in the \jkindsynt tool which constructs Skolem functions from a contract written in a variant of the Lustre programming language and then compiles the Skolem functions into a C language implementation.
For a variety of benchmark models that already contained hand-written
implementations, we are able to identify the usability and effectiveness of the
synthesized counterparts, assuming a component-based verification framework.

\end{abstract}

\section{Introduction}

Automated synthesis research is concerned with discovering efficient algorithms to construct programs that are guaranteed to comply with predefined temporal specifications.
This problem has been well studied for propositional specifications, especially for (subsets of) LTL~\cite{gulwani2010dimensions}.
More recently, the problem of synthesizing programs for richer theories has been examined, including work in {\em template-based synthesis}~\cite{srivastava2013template}, which attempts to find programs that match a certain shape (the template), and {\em inductive synthesis}, which attempts to use counterexample-based refinement to solve synthesis problems~\cite{flener2001inductive}.  Such techniques have been widely used for stateless formulas over arithmetic domains~\cite{reynoldscounterexample}.
\textit{Functional synthesis} has also been effectively used to synthesize
subcomponents of already existing partial
implementations~\cite{kuncak2013functional}.

In this paper, we propose a new approach that can synthesize programs for
arbitrary {\em assume/guarantee contracts} that do not have to conform to specific template
shapes or temporal restrictions and supports infinite-state representations.
The contracts are described using safety properties involving real arithmetic (LRA).  Although the
technique is not guaranteed to succeed or terminate, we have used it to successfully synthesize a range of programs over non-trivial contracts.
It is more general than previous approaches for temporal synthesis involving theories, supporting both arbitrary safety properties rather than ``stateless''
properties (unlike~\cite{reynoldscounterexample,jiang2009interpolating}) and
arbitrary shapes for synthesized programs (unlike~\cite{beyene2014constraint,srivastava2013template}).
It also natively supports the use of infinite theories, in contrast to
work on reactive synthesis from GR(1) specifications
that focused on finite-domains~\cite{walker2014predicate}, and is inherently
faster in synthesizing implementations than user-guided techniques~\cite{ryzhykdeveloping}.

Our approach extends an existing algorithm that determines
the {\em realizability} of contracts involving infinite theories such as linear
integer/real arithmetic and/or uninterpreted
functions~\cite{Katis15:Realizability,katis2015machine}.  The algorithm,
explained in Section~\ref{sec:synthesis}, uses a quantified variant of
k-induction that can be checked by any SMT solver that supports quantification. 
Notionally, it checks whether a sequence of states that satisfy the contract of
depth $k$ is sufficient to guarantee the existence of a successor state that complies
with the contract, given an arbitrary input.  An outer loop of the algorithm
increases $k$ until either a solution or counterexample is found.

The step from realizability to synthesis involves moving from the existence of a witness (as can be provided by an SMT solver such as \textsc{Z3} or \textsc{CVC4}) to the witness itself.  For this, the most important obstacle is the (in)ability of the SMT solver to handle higher-order quantification. Fortunately, interesting directions to solving this problem have already surfaced, either by extending an SMT solver with native synthesis capabilities~\cite{reynoldscounterexample}, or by providing external algorithms that reduce the problem by efficient quantifier elimination methods~\cite{fedyukovich2015automated}.

Our synthesis relies on the implementation for realizability checking
in the \jkind model checker~\cite{jkind} and the skolemization procedure
implemented in the \aeval solver~\cite{fedyukovich2015automated}.

We combined the above ideas to create a sequential synthesis
approach, which we call \jkindsynt.  It applies the realizability checker
from~\cite{Katis15:Realizability} and then extracts a Skolem witness formula
from \aeval that can immediately be turned into a C program.  We first
proposed this idea in a workshop paper~\cite{katis2016towards}; however the
paper does not contain a construction algorithm (only a notional idea of what the result would look like), proof of correctness, implementation, or experimental results.  In fact, the idea as proposed is unimplementable because it uses a Skolem {\em relation} as a witness which does not yield a functional C program.  To support synthesis, we developed a post-quantifier-elimination algorithm on top of the existing \aeval solution.
It allows producing Skolem {\em functions} rather than {\em relations}, and it is always guaranteed to terminate.

The main contributions of our work are, therefore:
\begin{itemize}
	\item To the best of our knowledge, the first template-free approach that
	effectively uses a k-inductive proof of realizability to synthesize implementations
	from safety specifications.
	\item A framework for extracting fine-grained Skolem functions for
	$\forall\exists$-formulas in linear arithmetic.
	\item A prototype tool implementing the synthesis algorithm.
	\item An experiment demonstrating the application of the tool on various
	benchmark contracts with proofs of length $k=0$ and $k=1$.
\end{itemize}

Sections~\ref{sec:synthesis} and~\ref{sec:aeval} provide the necessary formal
background used in the synthesis algorithm and \aeval, as well as adjustments to
the latter to better support the needs of this work.
Section~\ref{sec:impl} provides a detailed description regarding the
algorithm implementation. Section~\ref{sec:experiment} presents the results from
using the algorithm to synthesize leaf-level component implementations.
In Section~\ref{sec:related}, we give a brief historical background on the
related research work on synthesis. Finally, we discuss potential future work
and conclude in Section~\ref{sec:conclusion}.

\newcommand{\realizable}{\textsc{realizable}}
\newcommand{\unrealizable}{\textsc{unrealizable}}
\newcommand{\skolems}{\textit{Skolems}}
\newcommand{\init}{\textit{Init}}
\newcommand{\isValid}{\textsc{isValid}\xspace}
\newcommand{\isInvalid}{\textsc{isInvalid}\xspace}
\newcommand{\isUnsat}{\textsc{isUnsat}\xspace}

\vspace{-.5em}
\section{Synthesis from Assume-Guarantee Contracts}
\label{sec:synthesis}
\vspace{-.5em}

%GRIGORY: no need to repeat the things since they all are mentioned in the intro (a couple of paragraphs above)
%
In this section we define  %a summary of the formal background
%that has already been established in previous work, regarding an algorithm that
%is able to generate leaf-level component implementations using only the
%information provided by the user through requirements expressed in the form of an
%Assume-Guarantee contract. Our approach supports the Linear Real
%Arithmetic (LRA) theory.
%mainly due to the limitations imposed by the underlying machinery.
% We begin with
Assume-Guarantee contracts (Sect.~\ref{sec:pre}),
 sketch an existing algorithm for determining contract realizability
(Sect.~\ref{sec:old}),
present our new algorithm that bridges the gap between the realizability checking and synthesis (Sect.~\ref{sec:realizability-synthesis}),
and finally illustrate how the algorithm works on an example
(Sect.~\ref{sec:example}).
%Finally, we enrich our formal definitions with an informal proof of the
%algorithm's correctness in terms of the successfully synthesized
%implementations.

\vspace{-.5em}
\subsection{Assume-Guarantee Contracts}
\label{sec:pre}
\vspace{-.5em}

One popular way to describe software requirements is through Assume-Guarantee
contracts, where requirements are expressed using safety properties that are
split into two categories, \emph{assumptions} and \emph{guarantees}.
Contract \emph{assumptions} are properties that restrict the set of valid inputs
a system can process, while \emph{guarantees} dictate system behavior by constraining system outputs.

For example, consider the contract with the assumption $A = \{x\neq
y\}$ and the guarantee $G = \{x \leq y \Longrightarrow z =
\textit{true}, x \geq y \Longrightarrow z = \textit{false}\}$, for a component with two inputs $x$ and $y$ and one output $z$.  By assumption, $x \neq y$, so the implemented system could set $z$ to true if $x < y$ and false otherwise.  %Although this contract is trivial to implement, the contract does not provide an implementation, only a specification.
Of course, multiple implementations may exist for the same contract. An
alternative approach, for example, could set $z$ to false if $x > y$, and true
otherwise. Determining whether an implementation can be constructed to satisfy
the contract for all possible input sequences is the \emph{realizability} problem, while automatically constructing a witness of the proof of realizability of the contract is the \emph{program synthesis} problem.  The contract $(A,G)$ above is obviously \emph{realizable}, and therefore an implementation can be constructed.
However, if the assumption is omitted then the contract is \emph{unrealizable}, since there is no correct value for $z$ when $x=y$.

%\subsection{Formal Preliminaries}

We describe a system using the disjoint sets $state$ and $inputs$.
Formally, an \emph{implementation} is a \emph{transition system}
described by an initial state predicate $I(s)$ of type $state \to
bool$ and by a transition relation $T(s,i,s')$ of type $state \to
inputs \to state \to bool$.

An Assume-Guarantee (AG) contract can be formally defined by a set of
\emph{assumptions} and a set of \emph{guarantees}. The
\emph{assumptions}, $A: state \rightarrow inputs \rightarrow bool$,
impose constraints over the inputs which may be modal in terms of the
previous state. The \emph{guarantees} $G$ consist of two separate
subsets $G_I: state \rightarrow bool$ and $G_T: state \rightarrow
inputs \rightarrow state \rightarrow bool$, where $G_I$ defines the
set of valid initial states, and $G_T$ specifies the properties that
need to be met during each new transition between two states. Note
that we do not necessarily expect that a contract would be defined
over all variables in the transition system, but we do not make any
distinction between internal state variables and outputs in the
formalism. This way, we can use state variables to (in some cases)
simplify the specification of guarantees.

\vspace{-.5em}
\subsection{Realizability of Contracts}
\label{sec:old}
\vspace{-.5em}

The synthesis algorithm proposed in this paper is built on top of our previous
work on a realizability checking algorithm~\cite{Katis15:Realizability}.
Using the formal foundations described in Sect.~\ref{sec:pre}, the problem of realizability is expressed using the notion of a state being \emph{extendable}:

\begin{definition}[One-step extension]
\label{def:extend}
A state $s$ is extendable after $n$ steps, denoted $\mathit{Extend}_{n}(s)$, if
any valid path of length $n-1$ starting from $s$ can be extended in response to
any valid input.%
\begin{multline*}%
\mathit{Extend}_{n}(s) \triangleq \forall i_1, s_1, \ldots, i_n, s_n.\\ A(s, i_1) \land G_T(s, i_1, s_1)
\land \cdots \land
A(s_{n-1}, i_n) \land G_T(s_{n-1}, i_n, s_n)
\implies \\
\forall i.~ A(s_n, i) \implies \exists s'.~ G_T(s_n, i, s')
\end{multline*}
\end{definition}

The algorithm for realizability uses Def.~\ref{def:extend} in two
separate checks that correspond to the two traditional cases exercised
in k-induction. Initially, it is proved that the set of initial states is
not empty, by checking for the existence of at least one
state that satisfies $G_I$. For the $\mathit{BaseCheck}$, we ensure
that all initial states are extendable for any path of length $k < n$,
while the inductive step of $\mathit{ExtendCheck}$ tries to prove that
all valid states are extendable for any path of length $n$. Therefore,
we attempt to find the smallest $n$, for which the two following
$\forall\exists$-formulas are valid:%
\begin{equation}
\label{eq:sbcheck}
\mathit{BaseCheck}(n) \triangleq \forall k < n. (\forall s. G_I(s)
	  	\implies \mathit{Extend}_k(s))
\end{equation}%
\begin{equation}
\label{eq:echeck}
\mathit{ExtendCheck}(n) \triangleq \forall s. \mathit{Extend}_n(s)
\end{equation}

The realizability checking algorithm has been used to effectively find cases
where the traditional consistency check (i.e. the existence of an assignment
to the input variables for which the output variables satisfy the contract)
failed to detect conflicts between stated requirements in case studies of
different complexity and importance. It has also been formally verified using the Coq proof assistant in terms of its
soundness, for the cases where it reports that a contract is
realizable~\cite{katis2015machine}.

\subsection{Program Synthesis from Proofs of Realizability}
\label{sec:realizability-synthesis}

%While the implemented algorithm on realizability provided us with meaningful
%results during the verification of several contracts,
The main contribution of the paper is an algorithm for deriving implementations from the proof of a contract's realizability.
Indeed, the algorithm sketched in Sect.~\ref{sec:old}
can be further used for solving the more complex problem of
\emph{program synthesis}. 
%That is, we can automatically
%derive implementations from the proof of a contract's realizability.
%The limited power of SMT solvers
%in terms of solving formulas containing nested quantifiers immediately ruled
%out the prospect of using one as our primary synthesis tool. Fortunately,
%we are able to exploit our prior results in the scope of solving validity and
%Skolemizing $\forall\exists$-formulas (to be described in Sect.~\ref{sec:aeval}).
%The idea is simple.
Consider checks~\eqref{eq:sbcheck}
and~\eqref{eq:echeck} that are used in the realizability checking
algorithm. Both checks require that the reachable states explored are
extendable using Def.~\ref{def:extend}. The key insights are then 1)
we can start with a arbitrary state in $G_I$ since it is non-empty, 2)
we can use witnesses from the proofs of $\mathit{Extend}_k(s)$ in
$\mathit{BaseCheck}$ to create a valid path of length $n-1$, and 3) we
can extend that path to arbitrary length by repeatedly using the
witness of the proof of $\mathit{Extend}_n(s)$ in
$\mathit{ExtendCheck}(n)$.

In first order logic, witnesses for valid $\forall\exists$-formulas
are represented by Skolem functions. Intuitively, a Skolem function
expresses a connection between all universally quantified variables in
the left-hand side of the $\forall\exists$-formulas~\eqref{eq:sbcheck}
and~\eqref{eq:echeck} and the existentially quantified variable $s'$
within $\mathit{Extend}$ on the right-hand side. Our algorithm uses
the \aeval tool, detailed in Sect.~\ref{sec:aeval}, to generate such
Skolem functions from the validity of~\eqref{eq:sbcheck}
and~\eqref{eq:echeck}.

%\synthesisalgorithm

%% During the k-induction algorithm, two parallel
%% engines (\textsc{BaseEngine, ExtendEngine}) handle the base and
%% inductive step checks of validity of $\forall\exists$-formulas.
%% The proof of a formula's validity is closely tied to the process of
%% Skolemization: for every step that \textit{BaseCheck(n)} is valid,
%% \aeval provides a Skolem function to witness that validity.

\begin{figure}[t!]
\vspace{-5em}
\begin{minipage}{0.65\textwidth}
\scalebox{0.8}{
\begin{algorithm2e}[H]
\SetAlgoSkip{}
\SetKwFor{For}{for}{do}{}
\KwOut{$Result: \{\realizable, \unrealizable\}$, 
%\textcolor{red}{\init: state},
\skolems: Skolem list
}
\BlankLine
$Skolems \gets \langle \rangle$; \\
$InitResult \gets $\sc{Sat?}$(G_I)$; \\
\uIf(\label{alg:initState}){$(\isUnsat(InitResult))$}
	{%
		\Return
		\unrealizable, $\emptyset$, $\langle \rangle$;%
	}
%\textcolor{red}{$\init \gets InitResult.model$;} \\
\For{$(i \gets 0; \mathbf{true}; i \gets i + 1)$}{	
$ExtendResult \gets \textsc{\aeval}(ExtendCheck(i))$;
\\
\uIf(\label{alg:returnSat}){$(\isValid(ExtendResult))$}
{
	$Skolems.Add(ExtendResult.Skolem)$;\\
	\Return \realizable, %\textcolor{red}{\init},
	 \skolems;
}
$BaseResult \gets \textsc{\aeval}(BaseCheck_{k}(i))$\;
\uIf(\label{alg:returnUnsat}){$(\isInvalid(BaseResult))$}
	{%
		\Return
		\unrealizable, $\emptyset$, $\langle \rangle$;%
	}
$Skolems.Add(BaseResult.Skolem)$;\\

}
\caption{Synthesis from AG-Contracts}
\label{alg:synthesis}
\end{algorithm2e}}
\end{minipage}
\hspace{-0.8cm}
\begin{minipage}[t]{0.38\textwidth}
\scalebox{.8}{
\begin{template}[H]
\SetAlgoSkip{}
\SetKwFor{While}{forever}{do}{}
\BlankLine
  \textsc{assign\_Init()};

\BlankLine
  \textsc{read\_inputs()}\; 		
  \textsc{Skolems}[0]()\;
  $\ldots$\\
  \textsc{read\_inputs()}\;
  \textsc{Skolems}[k-1]()\;

\BlankLine

\While{}{
 \textsc{read\_inputs()}\;
 \textsc{Skolems}[k]()\;
 \textsc{update\_history()};
}
\caption{Structure of an implementation}
\label{alg:synt}
\end{template}}
\end{minipage}
%\caption{Synthesis algorithm and structure of implementations}
%\label{fig:synthalg}
\vspace{-2.5em}
\end{figure}

Alg.~\ref{alg:synthesis} provides a summary of the synthesis
procedure. The algorithm first determines whether the set of initial states $G_I$ is non-empty.
Second, it attempts to construct an inductive proof of the system's
realizability, using \aeval\ to find Skolem witnesses.  The algorithm
iteratively proves $\mathit{BaseCheck_k(i)} \triangleq \forall s. G_I(s) \implies \mathit{Extend}_i(s)$ and
accumulates the resulting Skolem functions. If
$\mathit{BaseCheck_k(i)}$ ever fails, we know $\mathit{BaseCheck(i)}$
would also fail and so the system is unrealizable. At the same time,
the algorithm tries to prove $\mathit{ExtendCheck(i)}$. As soon as the
inductive step of $\mathit{ExtendCheck(i)}$ passes, we have a complete
k-inductive proof stating that the contract is realizable. We then
complete our synthesis procedure by generating a Skolem function that
corresponds to the inductive step, and return the list of the Skolem
functions.  Note that in Alg.~\ref{alg:synthesis} for a particular depth $k$,
we perform the extends check prior to the base check.
The intuition is that $\mathit{BaseCheck(i)}$ checks
$\forall k < i$; thus, it is one step ``smaller'' than the extends
check and this avoids a special case at $k=0$.

Given a list of Skolem functions, it remains to plug them into
an implementation skeleton as shown in Template~\ref{alg:synt}.
The combination of Lustre models and k-inductive proofs
allow the properties in the model to manipulate the
 values of variables up to $k-1$ steps in the past. Thus,
the first step of an implementation  (method \textsc{assign\_Init()})
 creates an array for each state variable of size $k+1$, where
$k$ is the depth of the solution to Alg.~\ref{alg:synthesis}.
This array represents the depth of history necessary to compute
the recurrent Skolem function produced by the $\mathit{ExtendCheck}$ process.
The $\mathit{BaseCheck}$ Skolem functions initialize this history.

In each array, the $i$-th element, with $0\leq i \leq k-1$,
corresponds to the value assigned to the variable after the call to
$i$-th Skolem function. As such, the first $k-1$ elements of each array
correspond to the $k-1$ Skolem functions produced by the
$\mathit{BaseCheck}$ process, while the last element is used by the
Skolem function generated from the formula corresponding to the
$\mathit{ExtendCheck}$ process.

The template uses the Skolem functions generated by \aeval for each
of the $\mathit{BaseCheck}$ instances to describe the initial behavior of
the implementation prior to depth $k$.  %\textcolor{red}{This process starts
% from the memory-free witness $\init$ to the initial guarantee $G_I$, which is
%assigned in method \textsc{assign\_Init()}}.
There are two ``helper'' operations:
\textsc{update\_history()} shifts each element in the arrays one position
forward (the 0-th value is simply forgotten), and \textsc{read\_inputs()} reads the
current values of inputs into the $i$-th element of the variable arrays,
where $i$ represents the $i$-th step of the process.
Once the history is entirely initialized using the $\mathit{BaseCheck}$ Skolem functions,
we add the Skolem function for the $\mathit{ExtendCheck}$ instance to describe the
recurrent behavior of the implementation (i.e., the next value of outputs in
each iteration in the infinite loop).

To establish the correctness of the algorithm,
we have constructed machine-checked proofs as to the validity of $\mathit{BaseCheck(n)}$ and
$\mathit{ExtendCheck(n)}$ using the Skolem functions.
The entirety of the models explored in
this paper only involved proofs of realizability of length $k$ equal to 0 or
1.
As such, we limited our proofs of soundness to these two specific cases. We will
extend the proofs to capture any arbitrary $k$ as part of our future work.
The theorems were written and proved using the Coq proof
assistant.~\footnote{The proofs can be found
at~\href{https://github.com/andrewkatis/Coq}{https://github.com/andrewkatis/Coq}}.

\begin{theorem}[Bounded Soundness of BaseCheck and ExtendCheck using Skolem
Functions] Let $\mathit{BaseCheck}_{S(s_n,i,s')}(n)$ and
$\mathit{ExtendCheck}_{S(s_n,i,s')}(n)$, $n \in {0,1}$, be the valid
variations of the corresponding formulas $\mathit{BaseCheck(n)}$ and
$\mathit{ExtendCheck(n)}$, where the existentially quantified part $\exists s'.~
G_T(s_n, i, s')$ has been substituted with a witnessing Skolem function
$S(s_n,i,s')$.  We have that:
\begin{itemize}
\item $\forall (A,G_{I},G_{T}). \mathit{BaseCheck}(n) \Rightarrow \mathit{BaseCheck}_{S(s_n,i,s')}(n)$
\item $\forall (A,G_{I},G_{T}). \mathit{ExtendCheck}(n) \Rightarrow
ExtendCheck_{S(s_n,i,s')}(n)$
\end{itemize}
\end{theorem}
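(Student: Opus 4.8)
The plan is to reduce both implications to the defining property of a witnessing Skolem function: $S$ is constructed so that, on every input on which the innermost existential is asserted, it returns an actual successor state satisfying $G_T$. Since the theorem fixes $S$ to be a witness for the valid $\forall\exists$-formula, the Skolemized formula is exactly the original one with the subformula $\exists s'.~G_T(s_n,i,s')$ replaced by the instance $G_T(s_n,i,S(s_n,i))$ (or the corresponding relational atom), and the task is to show that this replacement preserves validity. This is the soundness direction of Skolemization, which is the one needed for synthesis: the realizability checker has already established the $\forall\exists$-formulas, and we must confirm that the extracted Skolem functions yield valid obligations, i.e.\ that the synthesized implementation meets the contract.

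First I would unfold the definitions of $\mathit{BaseCheck}$, $\mathit{ExtendCheck}$, and $\mathit{Extend}_n$ for the two values $n\in\{0,1\}$. For $n=0$ the quantifier $\forall k<0$ in $\mathit{BaseCheck}(0)$ is vacuous, so that implication is immediate; the remaining obligations are $\mathit{ExtendCheck}(0)$, where $\mathit{Extend}_0(s)$ collapses to $\forall i.~A(s,i)\implies\exists s'.~G_T(s,i,s')$ with empty path premise, and the two $n=1$ cases, where the premise is the single-step path $A(s,i_1)\land G_T(s,i_1,s_1)$ and the existential sits under $\forall i.~A(s_1,i)$. In each case I would introduce all universally quantified variables (the state $s$, the path variables $i_1,s_1,\dots$, and the final input $i$), assume the path premise together with the relevant $A$ hypothesis, and then invoke the original validity to obtain a witness $s'$ for $G_T(s_n,i,s')$.

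The decisive step is to replace that existential witness by the Skolem term $S(s_n,i)$: by the defining property of the witnessing Skolem function, $S(s_n,i)$ satisfies $G_T(s_n,i,S(s_n,i))$ exactly on the inputs where the premises hold, which discharges the Skolemized conclusion. Formally this is an appeal to the soundness direction of Skolemization, $\forall\bar u.(Q(\bar u)\implies\exists s'.R(\bar u,s'))$ entails $\forall\bar u.(Q(\bar u)\implies R(\bar u,S(\bar u)))$ for a witnessing $S$; the only content beyond this standard fact is bookkeeping the nested quantifier structure and the shared use of $S$ across the base and extend obligations.

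I expect the main obstacle to be the precise handling of the Skolem function's dependency and totality. The witness must be valid for every instantiation of the enclosing universals --- the entire path $i_1,s_1,\dots,s_n$ together with the final input $i$, not merely for some distinguished state --- and one must ensure that the term $S(s_n,i)$ that \aeval returns is exactly the relation or function whose witnessing property is being assumed. In the Coq formalization this amounts to taking $S$ as a parameter carrying its witnessing hypothesis, instantiating that hypothesis at the correct arguments after the path premise has been assumed, and treating the case where the premises fail (where $S$ may be arbitrary) by the vacuous truth of the implication. Restricting to $n\in\{0,1\}$ keeps the path premise to at most one step, so the bookkeeping stays finite and explicit rather than requiring an induction on $n$.
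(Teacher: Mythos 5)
Your proposal is correct and follows essentially the same route as the paper's (very terse) proof: unfold $\mathit{Extend}_n(s)$ for $n\in\{0,1\}$, replace the existentially quantified successor state with the Skolem term, and discharge the implication by the defining (witnessing) property of the Skolem function --- i.e.\ the soundness direction of Skolemization. Your additional remarks on handling the vacuous $\forall k<0$ case and on carrying the witnessing hypothesis as a parameter in the Coq development are consistent with what the paper's one-line argument leaves implicit.
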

\begin{proof}
The proof uses the definition $\mathit{Extend}_n(s)$ of an extendable state,
after replacing the next-step states with corresponding Skolem functions. From there,
the proof of the two implications is straightforward.
\qed
\end{proof}

\subsection{An Illustrative Example}
\label{sec:example}

\begin{figure}[t!]
\centering
\begin{minipage}[c]{0.6\textwidth}
 \begin{Verbatim}[fontsize=\scriptsize]
node top(x : int; state: int) returns (  );
var
  bias : int;
  guarantee1, guarantee2, guarantee3,
  guarantee4, guarantee5, guarantee_all : bool;
  bias_max : bool;
let
  bias = 0 -> (if x = 1 then 1 else -1) + pre(bias);
  bias_max = false ->
	(bias >= 2 or bias <= -2) or pre(bias_max);
  assert (x = 0) or (x = 1);
  guarantee1 = (state = 0 => (bias = 0));
  guarantee2 = true ->
  	(pre(state = 0) and x = 1) => state = 2;
  guarantee3 = true ->
  	(pre(state = 0) and x = 0) => state = 1;
  guarantee4 = bias_max => state = 3;
  guarantee5 = state = 0 or state = 1
                    or state = 2 or state = 3;
  guarantee_all = guarantee1 and guarantee2 and
          guarantee3 and guarantee4 and guarantee5;
  --%PROPERTY guarantee_all;
  --%REALIZABLE x;
tel;
 \end{Verbatim}
\end{minipage}
\scalebox{.7}{
\begin{minipage}[H]{0.5\textwidth}
\centering
\includegraphics[width=\textwidth,height=0.8\textheight,keepaspectratio]{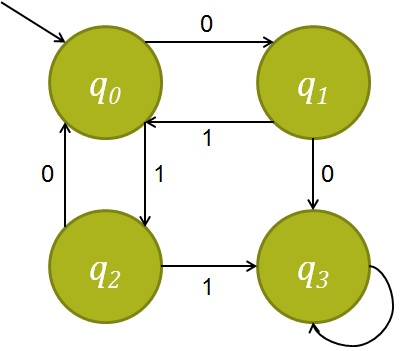}
\end{minipage}}
\vspace{-0.5em}
\caption{Requirements and possible implementation for example}
\vspace{-1.5em}
\label{fg:example}
\end{figure}

The left side of Fig.~\ref{fg:example} shows a (somewhat contrived) contract for a system that detects whether a string of two zeros or two ones ever occurs in a stream of inputs written in a dialect of the Lustre language~\cite{lustrev6}.  The right side shows a possible implementation of that contract, visualized as a state machine.
Rather than use this implementation, we would like to synthesize a new one directly from the contract. There are two unassigned variables in the contract,
\texttt{x} and \texttt{state}.
The \texttt{{-}{-}\%REALIZABLE} statement specifies that \texttt{x} is a system
input, and by its absence, that \texttt{state} is a system output. The contract's 
assumption is specified by the \texttt{assert} statement and restricts the allowable input values of \texttt{x} to either 0 or 1. We also have five guarantees:
\texttt{guarantee2} and \texttt{guarantee3} are used to indirectly
describe some possible transitions in the automaton;\footnote{In Lustre, the
arrow (\texttt{->}) and \texttt{pre} operators are used to provide an initial value and access the previous value of a stream, respectively.} \texttt{guarantee5} specifies the range of
values of variable \texttt{state};
\texttt{guarantee1} and \texttt{guarantee4} are the requirements with respect to
two local variables, \texttt{bias} and \texttt{bias\char`_max}, where
 \texttt{bias} calculates the number of successive ones or
zeros read by the automaton and \texttt{bias\char`_max} indicates that at least two zeros or two ones have been read in a row.

Note that while Lustre is a compilable language, using standard compilation tools the ``program'' in Fig.~\ref{fg:example} would not compile into a meaningful implementation: it has no outputs!  Instead, it defines the guarantees we wish to enforce within the controller, and our synthesis tool will construct a program which meets the guarantees.

The realizability check on this example succeeds with a k-inductive
proof of length $k = 1$. The two corresponding
$\forall\exists$-formulas ($k=0$ for the base check and $k=1$ for the
inductive check) are valid, and thus \aeval extracts two witnessing
Skolem functions that effectively describe assignments to the local
variables of the specification, as well as to \texttt{state} (see
Appendix~\ref{app:ex} for the particular formulas).

The Skolem functions are used to construct the final implementation
following the outline provided in Template~\ref{alg:synt}.
The main idea is to redefine each variable in the model
as an array of size equal to $k$ and
to use the $k$-th element of each array as the corresponding output of the call
to $k$-th Skolem function. After this initialization process, we use an infinite
loop to assign new values to the element corresponding to the last Skolem
function, to cover the inductive step of the original proof. The final code, a
snippet of which is presented below, is 144 lines long.
Since each Skolem is represented by an $\mathit{ite}$-statement (to be explained
in Sect.~\ref{sec:aeval}), each branch is further encoded into a C-code, as
shown in Figure~\ref{fg:snippet}.

\begin{figure}[t!]
%\begin{framed}
\vspace{-2em}
\begin{minipage}{2.0\textwidth}
\begin{lstlisting}[basicstyle=\scriptsize,language=C]{Name=test2}
if (((x[1] == 1 && (-1 == bias[0])) || (x[1] == 0 && (1 == bias[0])))
     && !bias_max[0] && (state[0] != 0 || x[1] == 0)
     && (!state[0] != 0 || x[1] == 1)) {
  bias_max[1] = 0;
  bias[1] = 0;
  state[1] = 0;
}
\end{lstlisting}%
\end{minipage}
%\end{framed}
\vspace{-1em}
\caption{A code snippet of the synthesized implementation for the contract from Fig.~\ref{fg:example}.}
\vspace{-.5em}
\label{fg:snippet}
\end{figure}%

Notice how each variable is represented by an array in the snippet above.
We chose to use this easy to understand representation in order to effectively
store all the past $k-1$ values of each variable, that may be needed during the
construction of the k-inductive proof.

Recall that the user-defined model explicitly specifies only two transitions
(via \texttt{guarantee2} and \texttt{guarantee3}), while the set of implicitly defined transitions (via \texttt{guarantee1} and \texttt{guarantee4}) is incomplete.
%For example, the model does not specify an incoming transition to (\texttt{state = 0}).
Interestingly, our synthesized implementation turns all implicit transitions
into explicit ones which makes them executable and, furthermore, adds the
missing ones (e.g., as in the aforementioned snippet, from \texttt{state = 1} to \texttt{state = 0}).

%%% Local Variables:
%%% mode: latex
%%% TeX-master: "document"
%%% End:

\newcommand{\such}{\,.\,}
\newcommand{\vx}{\vec{x}}
\newcommand{\vy}{\vec{y}}
\newcommand{\Land}{\bigwedge}
\newcommand{\Lor}{\bigvee}
\newcommand{\mbp}{\mathit{MBP}}
\newcommand{\unsat}{\textsc{unsat}}
\newcommand{\sat}{\textsc{sat}}
\newcommand{\valid}{\textsc{valid}\xspace}
\newcommand{\invalid}{\textsc{invalid}\xspace}
\newcommand*\rfrac[2]{{}^{#1}\!/_{#2}}
\newcommand{\tuple}[1]{\langle #1 \rangle}       % tuple (in mathmode)

\newcommand{\aevalalgorithm}{%
\vspace{-5em}
\begin{algorithm2e}[H]
\SetAlgoSkip{}
\SetKwFor{For}{for}{do}{}
%\SetAlgoNoLine
\SetKw{KwContinue}{continue}
\KwIn{$S(\vx), \exists \vy \such T(\vx,\vy)$.}
\KwOut{Return value $\in \{\valid, \invalid\}$ of ${S(\vx)\!\! \implies\!\! \exists \vy \such T(\vx,\vy)}$, Skolem.}
\KwData{ models $\{m_i\}$, MBPs $\{T_{i}(\vx)\}$, local Skolems $\{\phi_i({\vx,\vy})\}$.}
\BlankLine
$\textsc{SmtAdd}(S(\vx))$\; 
\For{$(i \gets 0; \mathbf{true}; i \gets i + 1)$}{	
%$res \leftarrow \textsc{SmtSolve}()$\label{alg:check_unsat_s}; \\
\lIf(\label{alg:returnUnsat}){$(\isUnSat(\textsc{SmtSolve}()))$}{\Return \valid, $\mathit{Sk}_{\vy} (\vx, \vy)$ from~\eqref{case:skolem}}
$\textsc{SmtPush}()$; 
$\textsc{SmtAdd}(T(\vx,\vy))$; \\
%$res \gets \textsc{SmtSolve}()$\label{alg:find_matching_ass};\\
\lIf(\label{alg:returnSat}){$(\isUnSat(\textsc{SmtSolve}()))$}{\Return \invalid, $\varnothing$}
$m_i \gets \textsc{SmtGetModel}()$\label{alg:model};\\ 
%$E \leftarrow extrapolate(m)$;\\
$(T_{i},\phi_i({\vx,\vy}))\! \gets\! \textsc{GetMBP}(\vy, m_i, T(\vx,\vy)))$\label{alg:proj};\\
$\textsc{SmtPop}()$;
$\textsc{SmtAdd}(\neg {T_{i}})$\label{alg:block}; \\
}
%\Return $res$;
\caption{\aeval \Big($S(\vx), \exists \vy \such T(\vx,\vy)$\Big), cf.~\cite{fedyukovich2015automated} }
\label{alg:ae_val}
\end{algorithm2e}
}

\newcommand{\localfactoralg}{%
\begin{algorithm2e}[H]
\SetAlgoSkip{}
\SetInd{0.4em}{0.4em}
\SetKwFor{ForAll}{forall}{do}{}
\SetKwFor{For}{for}{do}{}
\KwIn{$y_j \in \vy$, local Skolem relation $\phi(\vx,\vy) = \Land_{y_j \in \vy}(\psi_j(\vx,y_{j},\ldots, y_{n}))$, Skolem functions $y_{j+1} = f_{j+1}(\vx),\ldots, y_{n} = f_{n}(\vx)$.}
\KwData{Factored formula $\pi_j(\vx,y_{j}) = L_{\pi_j} \land U_{\pi_j} \land E_{\pi_j} \land N_{\pi_j}$. %; %assuming (for simplicity of presentation) that $M_{\pi_j} = \varnothing \land  V_{\pi_j} = \varnothing$.
}
\KwOut{Local Skolem function $y_j = f_j(\vx)$.}
\BlankLine
\For{$(i \gets n; i > j; i \gets i - 1)$}{	
  $\psi_j(\vx,y_{j},\ldots,y_{n}) \gets \textsc{Substitute}(\psi_j(\vx,y_{j},\ldots,y_{n}) , y_i, f_i(\vx))$\label{alg:loc_subst};\\
}
\BlankLine

$\pi_j(\vx,y_{j}) \gets \psi_j(\vx,y_{j},\ldots,y_{n})$\label{alg:elim_compl};\\
\BlankLine

\lIf(\label{alg:trivcase}){$(|E_{\pi_j}| \neq 0)$}{\Return $E_{\pi_j}$}

$\pi_j(\vx,y_{j}) \gets \textsc{Merge}(L_{\pi_j}, \mathit{MAX}, \pi_j(\vx,y_{j}))$\label{alg:merge1};\\
$\pi_j(\vx,y_{j}) \gets \textsc{Merge}(U_{\pi_j} , \mathit{MIN}, \pi_j(\vx,y_{j}))$\label{alg:merge2};\\
\BlankLine

\uIf(){$(|N_{\pi_j}| = 0)$}{
\lIf(){$(|L_{\pi_j}| \neq 0 \land |U_{\pi_j}| \neq 0)$}{\Return $\textsc{Rewrite}(L_{\pi_j}  \land U_{\pi_j} , \mathit{MID}, \pi_j(\vx,y_{j}))$\label{alg:rewrite_mid}}
\lIf(){$(|L_{\pi_j}| = 0)$}{\Return $\textsc{Rewrite}(U_{\pi_j} , \mathit{LT}, \pi_j(\vx,y_{j}))$\label{alg:rewrite_lt}}
\lIf(){$(|U_{\pi_j}| = 0)$}{\Return $\textsc{Rewrite}(L_{\pi_j} , \mathit{GT}, \pi_j(\vx,y_{j}))$\label{alg:rewrite_gt}}
}
\BlankLine
\lElse{\Return $\textsc{Rewrite}(L_{\pi_j} \land U_{\pi_j} \land N_{\pi_j}, \mathit{FMID}, \pi_j(\vx,y_{j}))$\label{alg:rewrite_fmid}}
\caption{$\textsc{ExtractSkolemFunction}(y_j, \phi(\vx,\vy)$)}
\label{alg:loc}
\end{algorithm2e}
}

\newcommand{\skolemcases}{%
\begin{equation}
\label{case:skolem}
\mathit{Sk}_{\vy} (\vx, \vy) \equiv
\begin{cases}
  \phi_{1} (\vx, \vy)  & \text{if } T_1 (\vx) \\
    \phi_{2} (\vx, \vy)  & \text{else if } T_2 (\vx)\\
  \cdots &\text{\qquad else }\cdots \\
  \phi_{n} (\vx, \vy) & \text{\qquad\qquad else } T_n (\vx) \\
\end{cases}
\end{equation}
}

\vspace{-1em}
\section{Witnessing existential quantifiers with \aeval}
\label{sec:aeval}

%Quantifier elimination is a decision procedure that turns a quantified formula into an equivalent quantifier-free formula.
%In addition, the quantifier elimination algorithms are often able to discover a Skolem function that represents witnesses for the existentially quantified individual variables (e.g.,~\cite{DBLP:conf/cav/BalabanovJ11,kuncak2013functional,KLXJOIA,Chakraborty15}).
%%
%Various tasks in verification and synthesis~\cite{DBLP:conf/fmcad/CimattiGMT13,DBLP:conf/popl/BeyeneCPR14,DBLP:conf/nfm/GasconT14} rely on efficient techniques to remove existential quantifiers from formulas in first order logic, thus adjusting the task to be decided by an SMT solver.
%In particular, \emph{functional synthesis} aims at computing a function that meets a given input/output relation.
%A function with an input $x$ and an output $y$, specified by a relation $f(x,y)$, can be constructed as a by-product of deciding validity of the formula $\forall x \exists y \such f(x,y)$.
%Due to a well-known \emph{AE-paradigm} (also referred to as \emph{Skolem paradigm}~\cite{DBLP:conf/popl/PnueliR89}),
%the formula $\forall x \exists y \such f(x,y)$ is equivalent to the formula $\exists \mathit{sk} \; \forall x \such f(x, \mathit{sk}(x))$, which means existence of a Skolem function $\mathit{sk}$, such that $f(x,\mathit{sk}(x))$ holds for every $x$.
%Thus the key feature in modern quantifier elimination approaches is their ability to produce witnessing Skolem function.

An important factor for the purposes of this work, is the use of the \aeval
Skolemizer.
We start this section with a brief background on \aeval (in Sect.~\ref{sim:check}) and continue (in Sect.~\ref{sec:new}) by presenting the key contributions on delivering Skolem functions appropriate for the program synthesis from proofs of realizability.

\subsection{Validity and Skolem extraction}
\label{sim:check}

Skolemization  is a well-known
technique for removing existential quantifiers in first order formulas.
%More specifically:
%
%\begin{definition}[cf.~\cite{skolem}]
%\label{def:sk_fun}
Given a formula ${ \exists y \such \psi(\vx, y)}$,
a~\emph{Skolem function} for $y$, $\mathit{sk}_{y}(\vx)$ is a function such that
$\exists y \such \psi(\vx,y)\!\iff\!\psi (\vx, \mathit{sk}_{y} (\vx))$.
%\end{definition}
We generalize the definition of a Skolem function for the case of a
vector of existentially quantified variables $\vy$, by relaxing the
relationships between elements of $\vx$ and $\vy$.
Given a formula ${\exists \vy \such \Psi(\vx, \vy)}$, a~\emph{Skolem relation} for $\vy$ is a relation ${\mathit{Sk}_{\vy} (\vx, \vy)}$ such that 1) $\mathit{Sk}_{\vy} (\vx, \vy) \implies \Psi (\vx, \vy)$ and 2) $\exists \vy \such \Psi(\vx, \vy)\!\iff\!\mathit{Sk}_{\vy} (\vx, \vy)$.

The pseudocode of the \aeval algorithm that  decides validity
and  extracts Skolem relation is shown in Alg.~\ref{alg:ae_val} (we refer the reader to~\cite{fedyukovich2015automated} for
more detail).
It assumes that the formula $\Psi$ can be transformed
into the form ${\exists \vy \such \Psi(\vx, \vy)} \equiv {S(\vx) \!\implies\! \exists \vy \such T(\vx,\vy)}$, where $S(\vx)$ has only existential quantifiers, and $T(\vx, \vy)$ is quantifier-free.
To decide the validity, \aeval partitions
the $\forall\exists$-formula and searches for a witnessing \emph{local} Skolem relation of
each partition.  \aeval iteratively constructs (line~\ref{alg:proj}) a set of Model-Based
Projections (MBPs): $T_i(\vx)$, such that (a) for each $i$,
$T_i(\vx) \!\implies\! \exists \vy \such T (\vx, \vy)$, and (b)
$S(\vx) \!\implies\! \Lor_i T_i(\vx)$.  Each MBP
$T_i(\vx)$ is connected with the local Skolem $\phi_i(\vx,\vy)$, such that
$\phi_i(\vx,\vy) \!\implies\! (T_{\vy_i}(\vx) \!\implies\!
  T(\vx,\vy))$.  \aeval relies on an external
procedure~\cite{komuravelli2014smt} to obtain MBPs for theories of Linear Real Arithmetic and Linear Integer Arithmetic.

Intuitively, each $\phi_i$ maps models of $S \land T_{i}$ to models of $T$.
Thus, a \emph{global} Skolem relation ${\mathit{Sk}_{\vy} (\vx, \vy)}$ is defined through a matching of each $\phi_i$ against the corresponding $T_{i}$:%
\skolemcases
     
\begin{figure}[t!]
\begin{center}
\resizebox{0.75\textwidth}{!}{%
  \begin{minipage}{.95\linewidth}
     \aevalalgorithm  
   \end{minipage}
}
\end{center}
\vspace{-3em}
\end{figure}

\vspace{-2.5em}
\subsection{Refining Skolem relations into Skolem functions}
\label{sec:new}

The output of the original \aeval algorithm does not fulfil our program synthesis needs due to two reasons: (1) interdependencies between $\vec{y}$-variables and (2) inequalities and disequalities in the terms of the Skolem relation. 
Indeed, in the lower-level \aeval constructs each MBP iteratively for each variable $y_j \in \vy$.
Thus, $y_j$ may depend on the variables of $y_{j+1},\ldots, y_{n}$ that are still not eliminated in the current iteration $j$.
%Each $\psi_j(\vx,y_{j},\ldots,y_{n})$ is the conjunction $\psi_j(\vx,y_{j},\ldots,y_{n}) = \Land_{i}(cl_i(\vx,y_{j},\ldots,y_{n}))$, where each $cl_i$ is  an (in)equality.

Inequalities and disequalities in a Skolem relation are not desirable because the final implementation should contain assignments to each existentially quantified variable.
To specify the exact assignment value, the Skolem relation provided by \aeval should be post-processed to contain only equalities.

We formalize this procedure as finding a Skolem function $f_j(\vx)$ for each $y_j\in \vy$, such that $(y_j = f_j(\vx)) \!\implies\! \exists y_{j+1},\ldots,y_{n} \such \psi_j(\vx,y_{j},\ldots,y_{n})$.
An iteration of this procedure, for some $y_j$, is presented in Alg.~\ref{alg:loc}.
At the entry point, it assumes that the Skolem functions $f_{j+1}(\vx),\ldots,f_{n}(\vx)$ for variables $y_{j+1},\ldots,y_n$ are already computed.
%The algorithm is applied separately for each $y_j \in \vy$, starting from $y_n$ to $y_1$.
%That is, for iteration $j$, the previous runs of the algorithm already delivered Skolem functions $f_{j+1}(\vx),\ldots,f_{n}(\vx)$ for variables $y_{j+1},\ldots,y_n$.
Thus, Alg.~\ref{alg:loc} straightforwardly substitutes each appearance of variables $y_{j+1},\ldots, y_{n}$ in $\psi_j$ by $f_{j+1}(\vx),\ldots,f_{n}(\vx)$.
Once accomplished (line~\ref{alg:elim_compl}), formula
$\psi_j(\vx,y_{j},\ldots,y_{n})$ has the form $\pi_j(\vx,y_{j})$, i.e., it does
not contain variables $y_{j+1},\ldots, y_{n}$.

\begin{figure}[t!]
\begin{center}
\resizebox{0.75\textwidth}{!}{%
  \begin{minipage}{.9\linewidth}
     \localfactoralg
   \end{minipage}
}
\end{center}
\vspace{-3em}
\end{figure}

The remaining part of the algorithm aims to derive a function $f_j(\vx)$, such that $y_j = f_j(\vx)$.
In other words, it should construct a graph of a function that is embedded in a relation.
Note that \aeval constructs each local Skolem relation by conjoining the substitutions made in $T$ to produce $T_{i}$.
Each of those substitutions in linear arithmetic could be either an equality,
inequality, or disequality.
This allows us consider each $\pi_j(\vx,y_{j})$ to be of the following form:%
\begin{equation}
\pi_j(\vx,y_{j}) = L_{\pi_j} \land U_{\pi_j} \land M_{\pi_j} \land V_{\pi_j} \land E_{\pi_j} \land N_{\pi_j}
\end{equation}
where:%
\begin{align*}
L_{\pi_j} &\triangleq \Land_{l \in C({\pi_j})}(y_j > l(\vx)) &
U_{\pi_j} &\triangleq \Land_{u \in C({\pi_j})}(y_j < u(\vx)) &  
M_{\pi_j} &\triangleq \Land_{l \in C({\pi_j})}(y_j \ge l(\vx)) \\
V_{\pi_j} &\triangleq \Land_{u \in C({\pi_j})}(y_j \le u(\vx))  &
E_{\pi_j} &\triangleq \Land_{e \in C({\pi_j})}(y_j = e(\vx))  &
N_{\pi_j} &\triangleq \Land_{h \in C({\pi_j})}(y_j \neq h(\vx))  
\end{align*}

We present several primitives needed to construct $y_j = f_j(\vx)$ from $\pi_j(\vx,y_{j})$ based on the analysis of terms in $L_{\pi_j}$, $U_{\pi_j}$, $M_{\pi_j}$, $V_{\pi_j}$, $E_{\pi_j}$ and $N_{\pi_j}$.
For simplicity, we omit some details on dealing with
non-strict inequalities in $M_{\pi_j}$ and $V_{\pi_j}$ since
they are similar to strict inequalities in $L_{\pi_j}$ and $U_{\pi_j}$.
Thus, without loss of generality, we assume that $M_{\pi_j}$ and $V_{\pi_j}$ are empty.
In the description, we denote the number of conjuncts in formula $A$ as $|A|$.
Finally, due to lack of space, we focus on Linear Real Arithmetic in this section; and the corresponding routine for Linear Integer Arithmetic is worked out similarly.
%If there are exist some, $f(\vx)$ defines some interval for the values for $y_j$.
%Method \textsc{SkolemFunction} obtains equalities 
%rewrites each clause using the following rules:

%The procedure to extract $y_j = f_j(\vx)$ out of $\pi_j(\vx,y_{j})$ proceeds by analyzing terms in $L_{\pi_j}$, $U_{\pi_j}$, $M_{\pi_j}$, $V_{\pi_j}$, $E_{\pi_j}$ and $N_{\pi_j}$.
The simplest case (line~\ref{alg:trivcase}) is when there is at least one conjunct $(y_j = e(\vx)) \in E_{\pi_j}$.
Then $(y_j = e(\vx))$ itself is a Skolem function.
Otherwise, the algorithm creates a Skolem function from the following primitives.

\begin{definition}
Let $l(\vx)$ and $u(\vx)$ be two terms in linear real arithmetic,
then operators $\mathit{MAX}$, $\mathit{MIN}$, $\mathit{MID}$, $\mathit{LT}$, $\mathit{GT}$ are defined as follows:
\begin{align*}
\mathit{GT} (l) (\vx) &\triangleq l (\vx) + 1 \notag &
\mathit{MAX}(l, u) (\vx) &\triangleq ite (l (\vx) < u (\vx), u(\vx), l(\vx)) \notag \\
\mathit{LT} (u) (\vx) &\triangleq u (\vx) -1 \notag &
\mathit{MIN}(l, u) (\vx) &\triangleq ite (l (\vx) < u (\vx), l(\vx), u(\vx)) \notag \\
\mathit{MID}(l, u) (\vx) &\triangleq \frac{l(\vx) + u(\vx)}{ 2} \notag 
\end{align*}
\end{definition}

\begin{lemma}
If $|L_{\pi_j}| = n$, and $n>1$, then $L_{\pi_j}$ is equivalent to $y_j > \mathit{MAX} (l_1,$ $\mathit{MAX} (l_2,\ldots \mathit{MAX} (l_{n-1}, l_n) )) (\vx)$.
If $|U_{\pi_j}| = n$, and $n>1$, then $U_{\pi_j}$ is equivalent to $y_j < \mathit{MIN} (u_1, \mathit{MIN} (u_2,\ldots \mathit{MIN} (u_{n-1}, u_n) ))(\vx)$.
%If  $M_{\pi_j}$ consists of $n>1$ conjuncts then it is equivalent to $y_j \ge \mathit{MAX} (l_1, \mathit{MAX} (l_2,\ldots \mathit{MAX} (l_{n-1}$, $ l_n) )) (\vx)$.
%If  $V_{\pi_j}$ consists of $n>1$ conjuncts then it is equivalent to $y_j \le \mathit{MIN} (u_1$, $\mathit{MIN} (u_2,\ldots \mathit{MIN} (u_{n-1}, u_n) )) (\vx)$.
\end{lemma}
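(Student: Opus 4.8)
The plan is to prove both equivalences by induction on $n$, after first isolating the purely semantic content of the $\mathit{MAX}$ and $\mathit{MIN}$ primitives. The two claims are dual, so I would carry out the $L_{\pi_j}$ case in detail and obtain the $U_{\pi_j}$ case by the symmetric argument at the end.

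First I would establish the base observation that, for any fixed assignment to $\vx$, the term $\mathit{MAX}(l,u)(\vx)$ evaluates to the numerical maximum $\max\{l(\vx), u(\vx)\}$. This follows immediately from unfolding the definition $\mathit{MAX}(l,u)(\vx) = ite(l(\vx) < u(\vx), u(\vx), l(\vx))$: when $l(\vx) < u(\vx)$ the term returns $u(\vx)$, and otherwise it returns $l(\vx)$; in the boundary case $l(\vx) = u(\vx)$ both branches agree, so the strictness of the guard is harmless. Since the real maximum is associative, a straightforward induction on $n$ then shows that the right-nested term $\mathit{MAX}(l_1, \mathit{MAX}(l_2, \ldots, \mathit{MAX}(l_{n-1}, l_n)))(\vx)$ evaluates to $\max\{l_1(\vx), \ldots, l_n(\vx)\}$ at every point $\vx$: the base case $n=2$ is the observation just made, and in the inductive step the inner nested term evaluates to $\max\{l_2(\vx), \ldots, l_n(\vx)\}$ by hypothesis, whence one more application of $\mathit{MAX}$ together with associativity yields the full maximum.

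With this in hand, the equivalence is pointwise. Fix any assignment to $\vx$ and $y_j$. The formula $L_{\pi_j}$ holds iff $y_j > l_i(\vx)$ for every $i$, which is exactly the statement that $y_j$ strictly exceeds each of the finitely many values $l_1(\vx), \ldots, l_n(\vx)$; this is equivalent to $y_j > \max_i l_i(\vx)$, because a real number strictly exceeds every element of a finite set iff it strictly exceeds the largest element. Combining this with the evaluation of the nested $\mathit{MAX}$ above gives $L_{\pi_j} \iff (y_j > \mathit{MAX}(l_1, \ldots)(\vx))$ at the chosen point, and since the point was arbitrary the two open formulas are equivalent. The $U_{\pi_j}$ case is identical after replacing $\mathit{MAX}$ by $\mathit{MIN}$, $\max$ by $\min$, and ``exceeds'' by ``lies below'': one checks $\mathit{MIN}(l,u)(\vx) = \min\{l(\vx), u(\vx)\}$, the nested $\mathit{MIN}$ computes $\min_i u_i(\vx)$, and $y_j$ lies strictly below every $u_i(\vx)$ iff it lies strictly below their minimum.

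There is no deep obstacle here; the only points requiring care are bookkeeping. I expect the main subtlety to be making the ``equivalent'' claim precise, since it is an equivalence of open formulas in the free variables $\vx$ and $y_j$ and must therefore be read as agreement of their models, which is why the argument proceeds pointwise. A secondary point worth stating explicitly is that the particular right-nested association chosen in the statement is legitimate only because numerical $\max$ (resp. $\min$) is associative, so the nesting order does not affect the computed value, and that the strict-versus-non-strict guard in the $ite$ definition causes no issue precisely because the two branches coincide on the tie.
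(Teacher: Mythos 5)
Your proof is correct: the paper states this lemma without any proof, and your argument (unfolding $\mathit{MAX}$/$\mathit{MIN}$ to the pointwise numerical maximum/minimum, inducting on the nesting depth via associativity, and then using the fact that a real strictly exceeds every member of a finite set iff it exceeds the largest) is exactly the routine justification the paper evidently intends and omits. No discrepancy to report.
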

%Similar for $M_{\pi_j}$ and for $V_{\pi_j}$. 
This primitive is applied (lines~\ref{alg:merge1}-\ref{alg:merge2}) in order to reduce the size of $L_{\pi_j}$ and $U_{\pi_j}$.
Thus, from this point on, with out
loss of generality, we assume that each $L_{\pi_j}$ and $U_{\pi_j}$
%,$M_{\pi_j}$, $V_{\pi_j}$ 
have at most one conjunct.

\begin{lemma}
If $|L_{\pi_j}| = |U_{\pi_j}|=1$, and $|E_{\pi_j}| = |N_{\pi_j}| = 0$, then the Skolem relation can be rewritten into $y_j = \mathit{MID} (l, u)(\vx)$. 
\end{lemma}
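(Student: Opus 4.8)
The plan is to verify that the functional assignment $y_j = \mathit{MID}(l,u)(\vx)$ satisfies the defining property of a Skolem function for the relation $\pi_j(\vx, y_j)$. Under the stated hypotheses $|L_{\pi_j}| = |U_{\pi_j}| = 1$ and $|E_{\pi_j}| = |N_{\pi_j}| = 0$ (the merge steps of lines~\ref{alg:merge1}--\ref{alg:merge2} having collapsed $L_{\pi_j}$ and $U_{\pi_j}$ to a single conjunct each), $\pi_j(\vx, y_j)$ is exactly the open-interval constraint $l(\vx) < y_j < u(\vx)$. Recalling the definition of a Skolem function, it suffices to establish the equivalence $(\exists y_j \such \pi_j(\vx,y_j)) \iff \pi_j(\vx, \mathit{MID}(l,u)(\vx))$, where the right-hand side is the formula obtained by substituting the candidate assignment. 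The right-to-left direction is immediate, since $\mathit{MID}(l,u)(\vx)$ then witnesses the existential.

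For the left-to-right direction, I would first reduce the existential. Because the domain is linear real arithmetic, the order is dense, so for any fixed valuation of $\vx$ the open interval $(l(\vx), u(\vx))$ is non-empty exactly when $l(\vx) < u(\vx)$; hence $\exists y_j \such (l(\vx) < y_j < u(\vx))$ is equivalent to the single atom $l(\vx) < u(\vx)$. It then remains to check that the midpoint lies strictly inside the interval under this hypothesis, which is a one-line computation: $l(\vx) < \frac{l(\vx)+u(\vx)}{2}$ simplifies to $l(\vx) < u(\vx)$, and $\frac{l(\vx)+u(\vx)}{2} < u(\vx)$ simplifies likewise. Thus $l(\vx) < u(\vx)$ implies $\pi_j(\vx, \mathit{MID}(l,u)(\vx))$, closing the equivalence and showing that $y_j = \mathit{MID}(l,u)(\vx)$ is a valid Skolem function. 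Note that in the unsatisfiable case $l(\vx) \ge u(\vx)$ both sides of the equivalence are false, so the two formulas agree uniformly over all valuations of $\vx$ and no case split on $\vx$ is needed.

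The only delicate point is the reduction of the existential, and this is precisely where the restriction to linear real arithmetic is essential: density of the reals guarantees both that satisfiability of $\pi_j$ is captured by the atom $l(\vx) < u(\vx)$ and that the chosen midpoint is always an admissible witness. Over the integers this step fails, since an open interval can be non-empty while its real midpoint is not an integer point, which is exactly why the integer case is handled by a separate but analogous routine. Everything else is routine linear arithmetic, so I expect the write-up to be short once the density argument is stated carefully.
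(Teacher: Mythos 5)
Your proof is correct and supplies exactly the argument the paper leaves implicit: the paper states this lemma as an unproved primitive, and the intended justification is precisely your observation that over the dense order of the reals $\exists y_j\such (l(\vx) < y_j < u(\vx))$ reduces to $l(\vx) < u(\vx)$, under which the midpoint is always a strict interior witness. Your closing remark about why this fails over the integers also correctly explains the paper's separate treatment of LIA.
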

%Similar for $M_{\pi_j}$ and $V_{\pi_j}$, and for combinations with $L_{\pi_j}$ and $U_{\pi_j}$.
This primitive is applied (line~\ref{alg:rewrite_mid}) in case the graph of a Skolem function can be constructed exactly in the middle of the two graphs for the lower- and the upper boundaries for the Skolem relation.
Otherwise, if some of the boundaries are missing, but $|N_{\pi_j}| = 0$ (lines~\ref{alg:rewrite_lt}-\ref{alg:rewrite_gt}), then the following primitive is applied:

\begin{lemma}
If $|L_{\pi_j}| =1$, and $|U_{\pi_j}| = |E_{\pi_j}| = |N_{\pi_j}| = 0$, then the Skolem relation can be rewritten into the form $y_j = \mathit{GT} (l)(\vx)$.
If $|U_{\pi_j}|=1$, and $|L_{\pi_j}|=|E_{\pi_j}|=|N_{\pi_j}|=0$, then the Skolem relation can be rewritten into the form $y_j = \mathit{LT} (l)(\vx)$.
\end{lemma}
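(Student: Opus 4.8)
The plan is to show that under the stated hypotheses the relation $\pi_j(\vx,y_j)$ degenerates to a single strict inequality, and that the proposed equality is a genuine Skolem function for it in the sense of Sect.~\ref{sim:check}. Recall that to refine $\pi_j$ into a function $y_j = f_j(\vx)$ we must verify two things: (i) \emph{soundness}, $(y_j = f_j(\vx)) \implies \pi_j(\vx,y_j)$, i.e.\ the assigned value lies in the relation; and (ii) \emph{satisfiability preservation}, $(\exists y_j\such \pi_j(\vx,y_j)) \implies \pi_j(\vx,f_j(\vx))$, i.e.\ whenever the relation admits some witness, $f_j(\vx)$ is one. Both conditions reduce to one-line arithmetic facts.

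First I would simplify the relation. For the first claim, the hypotheses $|U_{\pi_j}| = |E_{\pi_j}| = |N_{\pi_j}| = 0$ together with the standing assumption that $M_{\pi_j}$ and $V_{\pi_j}$ are empty leave $\pi_j(\vx,y_j) \equiv L_{\pi_j} = (y_j > l(\vx))$, a single strict lower bound. Symmetrically, for the second claim $\pi_j(\vx,y_j) \equiv U_{\pi_j} = (y_j < u(\vx))$.

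Next I would substitute the candidate function and discharge (i). For the first claim, plugging $y_j = \mathit{GT}(l)(\vx) = l(\vx)+1$ into $L_{\pi_j}$ yields the proof obligation $l(\vx)+1 > l(\vx)$, which is valid in linear real arithmetic (and, since $l(\vx)$ is integer-valued in the linear integer case, also there). For the second claim, plugging $y_j = \mathit{LT}(u)(\vx) = u(\vx)-1$ yields $u(\vx)-1 < u(\vx)$, again universally valid. For (ii), note that a single unbounded strict inequality is satisfiable for every $\vx$, and the witness realizing satisfiability is precisely the value computed above; hence $f_j$ never fails to produce a member of the relation, and by (i)--(ii) the equality $y_j = f_j(\vx)$ is equivalent to $\pi_j(\vx,y_j)$ as a Skolem function.

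There is no deep obstacle here; the only point demanding care is the linear-integer case, where the $\pm 1$ offset must simultaneously keep $f_j(\vx)$ integer-valued and strictly satisfy the bound. This holds because $l(\vx)$ (resp.\ $u(\vx)$) is integer-valued, so $l(\vx)+1$ is exactly the least integer strictly above $l(\vx)$ (resp.\ $u(\vx)-1$ the greatest integer strictly below $u(\vx)$). A secondary, purely definitional subtlety is that one must read ``rewritten into the form $y_j = f_j(\vx)$'' as the Skolem-function equivalence captured by (i)--(ii), not merely a one-directional logical implication.
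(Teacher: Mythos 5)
Your proof is correct and matches the intent of the paper, which states this lemma without an explicit proof (treating it as immediate from the definitions of $\mathit{GT}$ and $\mathit{LT}$): the relation collapses to a single strict bound, and $l(\vx)+1$ (resp.\ $u(\vx)-1$) trivially satisfies it while always existing, giving both the implication into $\pi_j$ and satisfiability preservation. Your added care about the integer case and the observation that the paper's statement should read $\mathit{LT}(u)(\vx)$ rather than $\mathit{LT}(l)(\vx)$ are both sound refinements, not deviations.
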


Finally, the algorithm handles the cases when $|N_{\pi_j}|>0$ (line~\ref{alg:rewrite_fmid}).
We introduce operator $\mathit{FMID}$ that for given $l$, $u$, and $h$ and each $\vx$ outputs either $\mathit{MID}(l,u)$ or $\mathit{MID}(l,\mathit{MID}(l,u))$ depending on if $\mathit{MID}(l,u)$ equals to $h$ or not.

\begin{lemma}
If  $|L_{\pi_j}|=|U_{\pi_j}|=|N_{\pi_j}|=1$, and $|E_{\pi_j}|=0$, then the Skolem relation can be rewritten into the form $y_j\! =\! \mathit{FMID} (l,\! u,\! h)(\vx)$, where%
\begin{align}
\mathit{FMID}(l, u, h) (\vx) \triangleq ite \Big(&\mathit{MID}(l, u) (\vx) = h(\vx),  \notag \\
                          &\mathit{MID}\big(l, \mathit{MID}(l, u)\big) (\vx), \,\,
                          \mathit{MID}(l, u) (\vx)\Big) \notag 
\end{align}
\end{lemma}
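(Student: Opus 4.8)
The plan is to show that the equation $y_j = \mathit{FMID}(l,u,h)(\vx)$ defines a correct Skolem function for the normalized relation $\pi_j(\vx,y_j) = (y_j > l(\vx)) \land (y_j < u(\vx)) \land (y_j \neq h(\vx))$, where the single occurrences $l$, $u$, $h$ come from the hypotheses $|L_{\pi_j}|=|U_{\pi_j}|=|N_{\pi_j}|=1$, $|E_{\pi_j}|=0$ together with the merge step (so that, without loss of generality, $L_{\pi_j}$ and $U_{\pi_j}$ each contribute exactly one bound). Following the earlier lemmas, the obligation is to verify that on every $\vx$ for which $\pi_j$ is satisfiable, the value $\mathit{FMID}(l,u,h)(\vx)$ itself satisfies all three conjuncts; outside this region there is no witness to match and nothing to prove. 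First I would observe that over linear real arithmetic $\pi_j$ is satisfiable exactly when $l(\vx) < u(\vx)$: the open interval $(l(\vx),u(\vx))$ is then nonempty and, being infinite, necessarily contains a point distinct from $h(\vx)$. So I may assume $l(\vx) < u(\vx)$ throughout, abbreviating $l,u,h$ for their values at $\vx$.

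The core of the argument is a two-way case split on the $ite$ guard $\mathit{MID}(l,u) = h$, i.e., on whether $\frac{l+u}{2} = h$. In the \emph{else} branch $\mathit{FMID}$ returns $\mathit{MID}(l,u) = \frac{l+u}{2}$; from $l<u$ this value lies strictly inside $(l,u)$, and the branch guard gives $\frac{l+u}{2} \neq h$ directly, so all three conjuncts hold. In the \emph{then} branch the midpoint collides with $h$, and $\mathit{FMID}$ instead returns the pushed-in point $\mathit{MID}(l,\mathit{MID}(l,u)) = \frac{3l+u}{4}$. I would verify membership $l < \frac{3l+u}{4} < u$ by reducing each inequality algebraically to $l < u$, and the disequality $\frac{3l+u}{4} \neq h$ by substituting $h = \frac{l+u}{2}$ and reducing to $l \neq u$, which again follows from $l < u$. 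Hence in both branches $\mathit{FMID}(l,u,h)(\vx)$ satisfies $\pi_j$, establishing the claim.

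The only genuinely delicate point is the \emph{then} branch: having been forced off the natural midpoint by the disequality, we must confirm that the replacement witness both stays inside the open interval and avoids $h$ a second time --- this is exactly why $\mathit{MID}(l,\mathit{MID}(l,u))$ rather than an arbitrary shift is chosen, and its correctness ultimately rests on the same strict inequality $l < u$. I also expect to flag that the use of exact halving makes the argument specific to LRA; the linear integer case (handled analogously elsewhere in the paper) cannot reuse the midpoint verbatim, both because $\frac{l+u}{2}$ need not be integral and because a sufficiently narrow integer interval may fail to contain any point $\neq h$, so satisfiability there is no longer equivalent to $l < u$. Collecting the two cases yields $(y_j = \mathit{FMID}(l,u,h)(\vx)) \implies \pi_j(\vx,y_j)$ on the satisfiable region, which is precisely the Skolem-function property required.
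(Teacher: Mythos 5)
Your proof is correct, and it is exactly the argument the lemma is meant to rest on: the paper itself states this lemma without proof (it is only invoked wholesale in the one-line proof of the soundness theorem), and your case split on the guard $\mathit{MID}(l,u)(\vx) = h(\vx)$, together with the algebraic reductions of $l < \frac{3l+u}{4} < u$ and $\frac{3l+u}{4} \neq \frac{l+u}{2}$ to the strict inequality $l(\vx) < u(\vx)$, supplies precisely the missing details. Your observations that satisfiability of $\pi_j$ over LRA is equivalent to $l(\vx) < u(\vx)$ and that the midpoint construction does not transfer verbatim to LIA are both accurate and consistent with the paper's stated restriction to linear real arithmetic.
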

%Similar for $M_{\pi_j}$ and $V_{\pi_j}$. 
For $|N_{\pi_j}|>1$, the Skolem gets rewritten
in a similar way recursively.

\begin{lemma}
If $|L_{\pi_j}|=|N_{\pi_j}|=1$, and $|E_{\pi_j}|=|U_{\pi_j}|=0$, then the Skolem relation can be rewritten into the form $y_j = \mathit{FMID} (l, GT(l), h)(\vx)$. 
If $|U_{\pi_j}|=|N_{\pi_j}|=1$, and $|E_{\pi_j}|=|L_{\pi_j}|=0$, then the Skolem relation can be rewritten into the form $y_j = \mathit{FMID} (LT(u), u, h)(\vx)$. 
\end{lemma}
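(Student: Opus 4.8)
The plan is to reduce both statements to the immediately preceding lemma (the case $|L_{\pi_j}|=|U_{\pi_j}|=|N_{\pi_j}|=1$) by synthesizing the boundary that is absent. Take Case~1 first, where $\pi_j(\vx,y_j) \equiv (y_j > l(\vx)) \land (y_j \neq h(\vx))$ carries a lower bound and a disequality but no upper bound. I would introduce a virtual upper bound $u \triangleq \mathit{GT}(l)$, so that the constraint set becomes $(y_j > l) \land (y_j < \mathit{GT}(l)) \land (y_j \neq h)$, which now matches the hypotheses of the previous lemma exactly. Because $\mathit{GT}(l)(\vx) = l(\vx)+1 > l(\vx)$ for every $\vx$, the artificial window $(l(\vx),\mathit{GT}(l)(\vx))$ is always nonempty, so that lemma applies and certifies that $\mathit{FMID}(l,\mathit{GT}(l),h)(\vx)$ lies strictly inside $(l(\vx),\mathit{GT}(l)(\vx))$ while differing from $h(\vx)$.

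The final step is to discard the synthetic bound and check that the stronger constraint entails the original weaker one. Any value in $(l(\vx),\mathit{GT}(l)(\vx))$ already satisfies $y_j > l(\vx)$, so dropping $y_j < \mathit{GT}(l)$ is harmless; together with $y_j \neq h(\vx)$ this yields $\pi_j(\vx,\mathit{FMID}(l,\mathit{GT}(l),h)(\vx))$, which is exactly the assertion that $y_j = \mathit{FMID}(l,\mathit{GT}(l),h)(\vx)$ is a correct Skolem function for $\pi_j$. Since $\pi_j$ here is always satisfiable over the reals, there is no vacuous case to rule out. Case~2 is symmetric: given only an upper bound $u$ and a disequality $h$, I would set the virtual lower bound $l \triangleq \mathit{LT}(u)$, note $\mathit{LT}(u)(\vx) = u(\vx)-1 < u(\vx)$, and invoke the previous lemma to place $\mathit{FMID}(\mathit{LT}(u),u,h)(\vx)$ inside $(u(\vx)-1,u(\vx)) \subseteq (-\infty,u(\vx))$ with value distinct from $h(\vx)$.

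I expect the only substance to be this reduction bookkeeping rather than any fresh analytic argument, since the dodging behaviour of $\mathit{FMID}$ was already discharged in the previous lemma. The one point needing care is that the \emph{narrower} synthetic window must still allow $\mathit{FMID}$ to avoid $h$: unfolding the $ite$ in $\mathit{FMID}(l,\mathit{GT}(l),h)$, whenever $\mathit{MID}(l,\mathit{GT}(l))(\vx) = h(\vx)$ the alternative branch returns $\mathit{MID}(l,\mathit{MID}(l,\mathit{GT}(l)))(\vx)$, which sits strictly between $l(\vx)$ and $h(\vx)$ and hence differs from $h(\vx)$, so a dodging value always exists inside $(l,\mathit{GT}(l))$. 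The Linear Integer Arithmetic variant follows the same scaffold, as the paper remarks, with the real midpoints replaced by their integer analogues.
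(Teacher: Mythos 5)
Your proposal is correct, and it matches the argument the paper intends: the paper states this lemma without an explicit proof, but the very form of the witness, $\mathit{FMID}(l,\mathit{GT}(l),h)$ resp.\ $\mathit{FMID}(\mathit{LT}(u),u,h)$, is exactly your reduction of synthesizing the missing bound via $\mathit{GT}$/$\mathit{LT}$ and then reusing the three-constraint $\mathit{FMID}$ case, with the stronger synthetic window entailing the original weaker relation. Your added check that the dodging branch $\mathit{MID}(l,\mathit{MID}(l,u)) = \tfrac{3l+u}{4}$ stays in the window and differs from $h=\tfrac{l+u}{2}$ (since $u>l$) is precisely the point that makes the reduction go through.
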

%Similar for $V_{\pi_j}$.

\begin{theorem}[Soundness]
Iterative application of Alg.~\ref{alg:loc} to all variables $y_n,\ldots,y_1$ returns a local Skolem function to be used in~\eqref{case:skolem}.
\end{theorem}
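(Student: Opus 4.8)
The plan is to show that the functions $f_n, \ldots, f_1$ produced by iterating Alg.~\ref{alg:loc} jointly form a local Skolem function: for every $\vx$ in the region carved out by the corresponding MBP $T_i$, the substitution $\vy = (f_1(\vx), \ldots, f_n(\vx))$ satisfies the local Skolem relation $\phi_i(\vx,\vy) = \Land_{m=1}^{n} \psi_m(\vx, y_m, \ldots, y_n)$, and hence $T(\vx,\vy)$. Since by construction of the MBP we have $\phi_i(\vx,\vy) \implies (T_i(\vx) \implies T(\vx,\vy))$, establishing that this functional selection satisfies $\phi_i$ immediately certifies that it may replace the relation $\phi_i$ in~\eqref{case:skolem}. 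I would carry out the argument by reverse induction on the index $j$, processing $y_n$ first down to $y_1$, with the induction invariant that after $f_j, \ldots, f_n$ have been computed, each conjunct $\psi_m$ with $m \ge j$ holds under the partial substitution $y_m = f_m(\vx), \ldots, y_n = f_n(\vx)$.

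For the base case $j = n$ the substitution loop (line~\ref{alg:loc_subst}, iterating $i$ from $n$ down to $j+1$) is empty, so $\pi_n(\vx, y_n)$ coincides with $\psi_n(\vx, y_n)$, a formula in the single unknown $y_n$. For the inductive step, assuming $f_{j+1}, \ldots, f_n$ already satisfy the invariant, line~\ref{alg:loc_subst} substitutes them into $\psi_j$ and line~\ref{alg:elim_compl} yields $\pi_j(\vx, y_j)$, a constraint over the single remaining unknown $y_j$. It then remains to show that the value $f_j(\vx)$ returned by the case analysis satisfies $\pi_j(\vx, f_j(\vx))$; since $\pi_j(\vx, y_j) = \psi_j(\vx, y_j, f_{j+1}(\vx), \ldots, f_n(\vx))$, this extends the invariant to $m = j$. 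Each branch is discharged by the matching lemma proved above: the trivial equality case (line~\ref{alg:trivcase}), the two-sided $\mathit{MID}$ case (line~\ref{alg:rewrite_mid}), the one-sided $\mathit{GT}$/$\mathit{LT}$ cases (lines~\ref{alg:rewrite_lt}--\ref{alg:rewrite_gt}), and the disequality $\mathit{FMID}$ cases (line~\ref{alg:rewrite_fmid}). Beforehand I would invoke the $\mathit{MAX}$/$\mathit{MIN}$ merge lemma (lines~\ref{alg:merge1}--\ref{alg:merge2}) to reduce $L_{\pi_j}$ and $U_{\pi_j}$ to at most one conjunct each, and recall the stated reduction that $M_{\pi_j}, V_{\pi_j}$ are empty without loss of generality, so that the branches become exhaustive over the shapes of $\pi_j$.

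Two facts feed the case analysis and must be pinned down. First, once $L_{\pi_j}, U_{\pi_j}$ are singletons and $M_{\pi_j}, V_{\pi_j}$ empty, the remaining freedom is the presence or absence of $E_{\pi_j}$ and of each bound, together with the size of $N_{\pi_j}$, and the listed branches cover every such combination. Second, and more delicately, each selected value must actually lie inside $\pi_j$, which requires $\pi_j$ to be satisfiable over $y_j$ for the given $\vx$. This is exactly the MBP guarantee: because $\phi_i$ maps every model of $S \land T_i$ to a model of $T$, the relation $\psi_j$, and therefore $\pi_j$, has a solution in $y_j$ for every $\vx$ satisfying $T_i$. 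Consequently, whenever both bounds are present we have $l(\vx) < u(\vx)$, so $\mathit{MID}(l,u)(\vx)$ lands strictly between them, and whenever a single bound is present the $\mathit{GT}$/$\mathit{LT}$ shift stays inside the open half-line.

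I expect the disequality handling to be the main obstacle, in particular the case $|N_{\pi_j}| > 1$ that the text dismisses as handled \emph{in a similar way recursively}. Here the selected value must simultaneously respect the strict bounds of $L_{\pi_j}, U_{\pi_j}$ and avoid every forbidden term $h_k(\vx)$ of $N_{\pi_j}$. For a single disequality, $\mathit{FMID}$ bisects toward $l$ only when $\mathit{MID}(l,u)$ coincides with the forbidden value, and I would argue correctness from the density of Linear Real Arithmetic: the open interval determined by $L_{\pi_j}, U_{\pi_j}$ is nonempty by satisfiability and contains infinitely many points, so a bounded number of bisection steps dodges all finitely many forbidden values without ever leaving the interval. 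The careful part is to formalize the recursion for $|N_{\pi_j}| > 1$ so that dodging one $h_k$ does not reintroduce a collision with another, and to confirm that the one-sided variants $\mathit{FMID}(l, \mathit{GT}(l), h)$ and $\mathit{FMID}(\mathit{LT}(u), u, h)$ synthesize an auxiliary bound that still leaves room to maneuver. Once every branch is certified, the invariant holds at $j = 1$, giving $\phi_i(\vx, f_1(\vx), \ldots, f_n(\vx))$ throughout the region $T_i$, which is the desired local Skolem function for~\eqref{case:skolem}.
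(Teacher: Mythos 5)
Your proposal is correct and takes essentially the same route as the paper, whose entire proof reads ``Follows from the case analysis that applies the lemmas above''; your reverse induction on $j$ with the invariant that $\psi_m$ holds under the partial substitution is precisely the scaffolding that case analysis needs, and each branch is discharged by the matching lemma exactly as the paper intends. The two points you flag as delicate --- that the MBP guarantee is what makes $\pi_j$ satisfiable so the selected value actually lands in the interval, and that the recursion for $|N_{\pi_j}|>1$ must avoid reintroducing collisions --- are real obligations the paper leaves implicit, and your treatment of them is the right one.
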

\begin{proof}
Follows from the case analysis that applies the lemmas above.
\qed
\end{proof}

Recall that the presented technique is designed to effectively remove inequalities and disequalities from local Skolem relations.
The resulting local Skolem functions enjoy a more fine-grained and
easy-to-understand form.
We admit that for further simplifications (that would benefit the synthesis procedure), we can exploit techniques to rewrite $\mbp$-s into compact \emph{guards}~\cite{fedyukovich2015automated}.

%%% Local Variables:
%%% mode: latex
%%% TeX-master: "document"
%%% End:

\vspace{-.5em}
\section{Implementation}
\label{sec:impl}
\vspace{-.5em}

We developed \jkindsynt, our synthesis algorithm on top of \jkind~\cite{jkind},
a Java implementation of the \textsc{KIND} model
checker.%
\footnote{An unofficial release of \jkind
supporting synthesis is available to download at
https://github.com/andrewkatis/jkind-1/tree/synthesis. \aeval needs to be
installed separately from https://github.com/grigoryfedyukovich/aeval.}
Each model is described using the Lustre language, which is used as an intermediate language to formally verify contracts in the Assume-Guarantee Reasoning (\textsc{AGREE}) framework~\cite{NFM2012:CoGaMiWhLaLu}.
Internally, \jkind uses two parallel engines (for $\mathit{BaseCheck}$ and
$\mathit{ExtendCheck}$) in order to construct a $k$-inductive proof for the
property of interest.
The first order formulas that are being constructed are then fed to the \textsc{Z3} SMT
solver~\cite{DeMoura08:z3} which provides state of the art support for reasoning
over quantifiers and incremental search.
For all valid $\forall\exists$-formulas, \jkindsynt proceeds to construct a list of
Skolem functions using the \aeval Skolemizer.
\aeval supports LRA  and LIA and thus provides the Skolem relation over integers and reals.%
\footnote{For realizability checks over Linear Integer and Real
  Arithmetic (LIRA), \jkind has an option to use \textsc{Z3} directly.}

As discussed in Sect.~\ref{sec:realizability-synthesis}, we construct a Skolem
function for each base check up to depth $k$ and one for the inductive relation
at depth $k$.  What remains is to knit those functions together into an implementation in C.  The \textsc{SMTLib2C} tool performs this translation, given an input list of the original Lustre specification (to determine the I/O interface) and the Skolem functions (to define the behavior of the implementation).  The main translation task involves placing the Skolem functions into the template described in Alg.~\ref{alg:synt}.  Each Skolem function describes a bounded history of at most depth $k$ over specification variables, so each variable is represented by an array of size $k$ in the generated program.  The tool ensures that the array indices for history variables match up properly across the successive base- and inductive-case Skolem functions.  Note that during this translation process, real variables in Skolem functions are defined as floats in C, which could cause overflow and precision errors in the final implementation.  We will address this issue in future work.

%In future work,
%we intend to further improve on this, and
%many other aspects of the compiler's performance as part of an individual
%future work.

%%% Local Variables:
%%% mode: latex
%%% TeX-master: "document"
%%% End:

\vspace{-.5em}
\section{Experimental Results}
\label{sec:experiment}
\vspace{-.5em}

For the purposes of this work, we synthesized implementations for 58
contracts written in Lustre,
including the running example from Fig.~\ref{fg:example}.~\footnote{The
benchmarks can be found at https://github.com/andrewkatis/synthesis-benchmarks/tree/master/verification.}
The original Lustre programs contained both the contract as well as an
implementation, which provided us with a complete test benchmark suite since
we were able to compare the synthesized implementations to already existing handwritten
solutions. By extracting the handwritten implementation, we synthesized an
alternative, and translated both versions to an equivalent C representation,
using \textsc{SMTLib2C} for the synthesized programs and the \lustrev
compiler~\cite{lustrev6}, including all of its optimization options, for the
original implementations.

% \begin{figure}[]
% 	\centering
% 	\includegraphics[width=0.6\textwidth,height=\textheight,keepaspectratio]{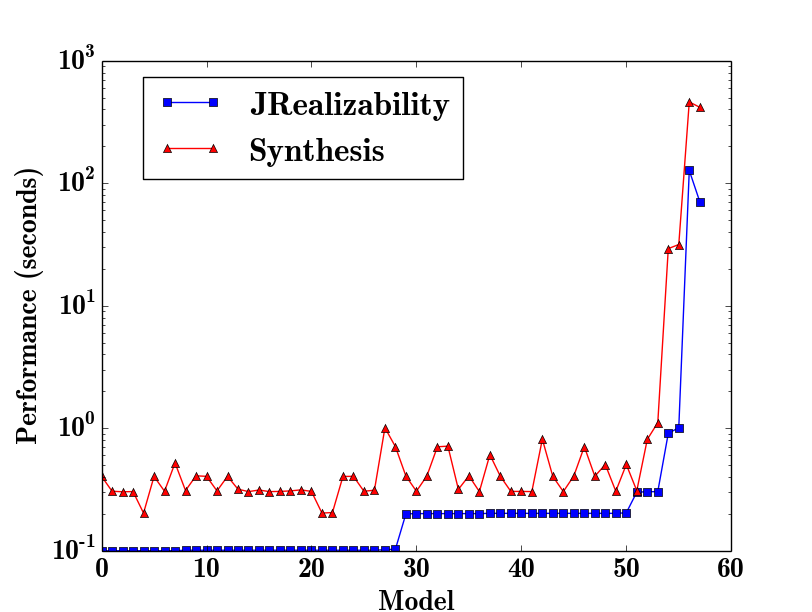}    	
% 	\caption{Overhead of synthesis to realizability checking}
% 	\label{fg:overhead}
% \end{figure}

\begin{figure}[t!]
	\centering
    \begin{tabular}[b]{c}
    \includegraphics[width=0.48\textwidth,height=\textheight,keepaspectratio]{overhead}
    \\
    \small (a) \quad \quad \quad
    \label{fg:overhead} 
    \end{tabular} 
    \begin{tabular}[b]{c}
    \includegraphics[width=0.48\textwidth,height=\textheight,keepaspectratio]{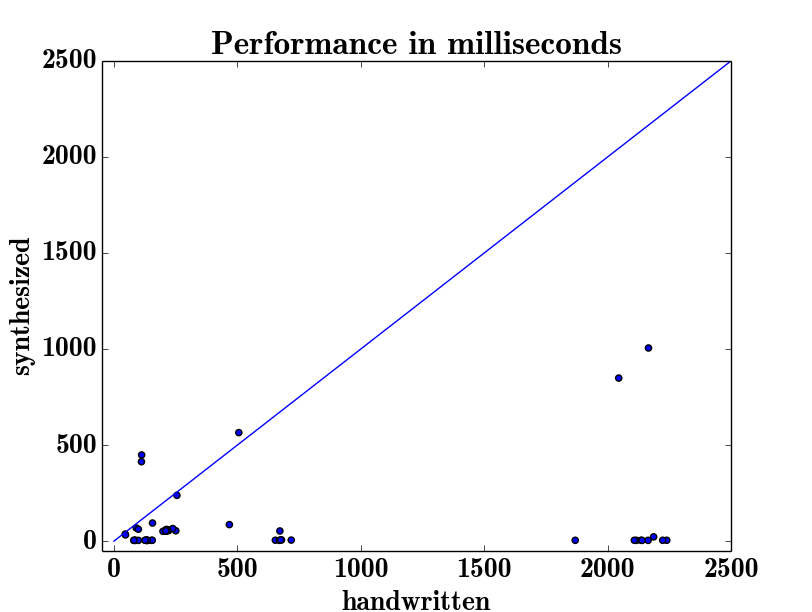}
    \\
    \small \quad \quad (b)
    \label{fg:performance} 
    \end{tabular}
    \\
\vspace{-1.5em}	
    \begin{tabular}[b]{c}
    \includegraphics[width=0.5\textwidth,height=\textheight,keepaspectratio]{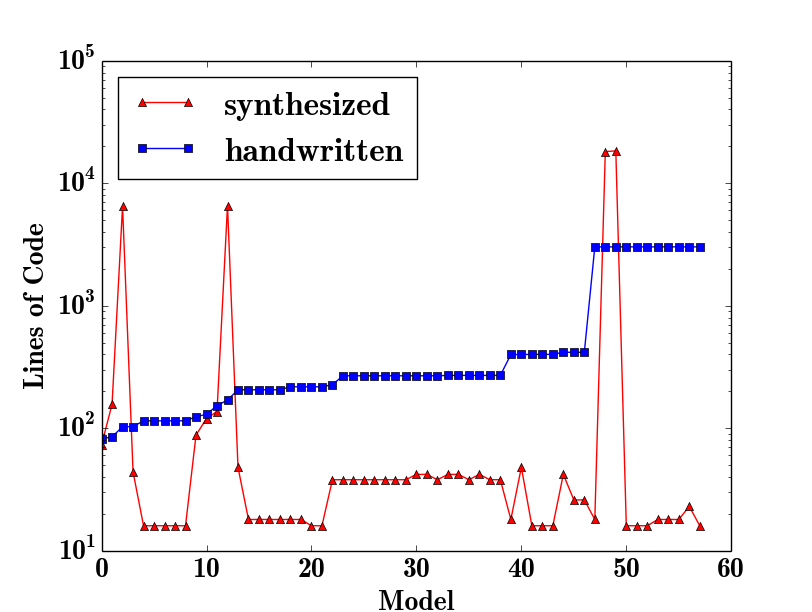}
    \\ \small (c)
    \label{fg:loc}
    %\subfloat[]{}
	%\subfloat[]{}
    \end{tabular}
%\subfloat[]{}
	\vspace{-1em}
	\caption{Experimental results}
	\label{fig:expfigs}
	\vspace{-2em}
\end{figure}

Fig.~\ref{fig:expfigs}(a) shows the overhead of synthesis by \jkindsynt
comparing to the realizability checking by \jkind, while %, with a few
% outstanding exceptions where it has a significant impact to the overall performance.
%A particularly interesting way to improve upon this is by switching to a more
%sophisticated algorithm, where we endorse the core idea of Property Directed
%Reachability in terms of finding a proof of realizability, in conjunction with
%AE-VAL's skolemization procedure.
%
Fig.~\ref{fig:expfigs}(b) provides a scatter plot of the results of our
experiments in terms of the performance of the synthesized programs against the original, handwritten
implementations. Each dot in the scatter plot represents a pair of
running times (x - handwritten, y - synthesized) of the 58
programs.
%, with the x axis being the performance of the handwritten
%program, while the y axis reports the corresponding performance of the
%synthesized implementation.
For the two most complex models in the benchmark
suite, the synthesized implementations underperform the
programs generated by \lustrev. As the level of complexity decreases, we notice
that both implementations share similar performance levels, and for the most
trivial contracts in the experiment set, the synthesized programs perform better
with a noticeable gap. We attribute these results mainly to the
simplicity of the requirements expressed in the majority of the models which were proved realizable for $k=0$ by \jkind,
except for the example from Section~\ref{sec:example}  and two complex contracts for a cruise
controller, which were proved for $k=1$.
It is important at this point to recall the fact that the synthesized
implementations are not equivalent to the handwritten versions, in a similar
fashion to the example used in Section~\ref{sec:example}.
% \begin{figure}[]
% 	\centering
% 	\includegraphics[width=0.6\textwidth,height=\textheight,keepaspectratio]{loc}    	
% 	\caption{Lines of code of synthesized and handwritten implementations}
% 	\label{fg:loc}
% \end{figure}

Fig.~\ref{fig:expfigs}(c) presents the size of the implementations.
Here, we can see the direct effect of the specification complexity to the size
of the Skolem functions generated by \aeval. Two out of the five synthesized
programs that are larger than their handwritten counterparts were also slower
than the handwritten implementations. Since the majority of the models contained simple requirements, the overall size of the synthesized implementation remained well below \lustrev-programs.

Handwritten implementations are still prevalent in application domains since
they provide advantages in numerous aspects, such as readability, extendability
and maintenance. Nevertheless, the results show that the synthesized
implementations can be used as efficient placeholders to reduce the
time required to verify a system under construction, without needing a final implementation for all its components.
%
%For future work, we hope to tackle such cases on three different frontiers. The
%first is again the use of a better algorithm that can effectively reduce the size of
%the transition relation used during the realizability checking algorithm.
%Another interesting idea here is the use of Inductive Validity Cores
%(IVCs)~\cite{Ghass16}, whose main purpose is to effectively pinpoint the
%absolutely necessary model elements in a generated proof. We can potentially use the
%information provided by IVCs as a preprocessing tool to reduce the size of
%the original specification, and hopefully the complexity of the realizability
%proof. Of course, a few optimizations can be further implemented in terms of
%AE-VAL's specific support on proofs of realizability and finally, a very
%important subject is the further improvement of the compiler that we created to
%translate the Skolem functions into C implementations, by introducing
%optimizations like common subexpression elimination.

\vspace{-.5em}
\section{Related Work}
\vspace{-.5em}

%Research in the field of program synthesis attributes its origins in the 1970s,
%when Zohar Manna and Richard Waldinger first introduced a synthesis procedure
%using theorem proving~\cite{manna1971toward}. 
%Almost two decades later, Amir
%Pnueli and Roni Rosner were the first to propose a way to synthesize
%implementations for temporal specifications~\cite{DBLP:conf/popl/PnueliR89}. 
Pnueli and Rosner were the
first to involve the formal definition of a reactive system's realizability (or
implementability), introduce the notion of a \emph{Skolem paradigm}, as well as
describe a process to synthesize implementations for temporal
specifications~\cite{DBLP:conf/popl/PnueliR89}.	
%
%i.e., the formula $\forall x \exists y \such f(x,y)$ is equivalent to the formula $\exists \mathit{sk} \; \forall x \such f(x, \mathit{sk}(x))$, which means existence of a Skolem function $\mathit{sk}$, such that $f(x,\mathit{sk}(x))$ holds for every $x$.
%
%
Since then, a vast variety of techniques have been developed. Efficient
algorithms were proposed for subsets of propositional LTL
\cite{Klein10,ehlers2010symbolic,cheng2016structural} simple
LTL formulas \cite{Bohy12,Tini03}, as well as other temporal
logics \cite{monmege2016real,Hamza10}, such as SIS~\cite{Aziz95}.
Component-based approaches have also been explored in~\cite{Chatterjee07}.

%\grigory{The rest can be shortened drastically. I believe, there is no need to
% cover every paper in synthesis.}

%Sumit Gulwani in 2010 published a survey on which he described the
%potential future directions of program synthesis research~\cite{gulwani2010dimensions}.
%The approaches that have been proposed are many, and differ on many aspects,
%either in terms of the specifications that are being exercised, or the reasoning
%behind the synthesis algorithm itself. 
Template-based approaches to
synthesis described in~\cite{srivastava2013template,beyene2014constraint} focus on the exploration of programs that satisfy a specification that is refined after each
iteration, following the basic principles of deductive synthesis.
In particular, the \textsc{E-HSF} engine~\cite{beyene2014constraint}, 
uses a predefined set of templates to search for potential Skolem
relations and thus to solve $\forall\exists$-formulas. 
In contrast, our synthesis algorithm is template-free.
Enumeration, is not an optimal choice, because the Skolem functions generated from
our benchmarks may contain 10-20 branches of the $ite$-statement. In these
cases, an enormous number of potential shapes may exist, thus prohibiting the use of
enumeration. 

Inductive synthesis is an active area of research where the main
goal is the generation of an inductive invariant that can be used to describe the space of programs that are guaranteed to satisfy the given specification~\cite{flener2001inductive}. This idea is mainly supported by the use of SMT solvers to guide the invariant refinement through traces that violate the requirements, known as counterexamples. Our approach differentiates from this approach by only considering the capability of constructing k-inductive proofs, with no further refinement of the problem space.

A rather important contribution in the area is the recently published work by
Ryzhyk and Walker~\cite{ryzhykdeveloping}, where they share their
experience in developing and using a reactive synthesis tool called \textsc{Termite} for
device drivers in an industrial environment. The driver synthesis uses
a predicate abstraction technique~\cite{walker2014predicate} to efficiently
cover the state space for both safety and liveness GR(1) specifications,
leveraging the theory of fixed-size vectors. The authors follow a user-guided
approach, that continuously interacts with the user in order to combat ambiguities in the specification.
In contrast, our approach supports safety specifications using infinite-state,
linear-arithmetic domains and follows a ``hands-off'', automated process.

\label{sec:related}

%%% Local Variables:
%%% mode: latex
%%% TeX-master: "document"
%%% End:

\vspace{-.5em}
\section{Future Work and Conclusion}
\label{sec:conclusion}
\vspace{-.5em}

In this paper, we contributed an approach to program synthesis guided by the
proofs of realizability of Assume-Guarantee contracts.
To check realizability, it performs k-induction-based reasoning to decide validity of a set of $\forall\exists$-formulas.
Whenever a contract is proven realizable, it further employs the Skolemization procedure and extracts a fine-grained witness to the realizability.
Those Skolem functions are finally encoded into a desirable implementation.
We implemented the technique in the \jkindsynt tool and evaluated it for the set
of Lustre models of different complexity.
The experimental results provided fruitful conclusions on the overall efficacy of the the approach.

To the best of our knowledge our work is the first complete attempt on providing
a synthesis algorithm based on the principle of k-induction using infinite
theories. The ability to express contracts that support ideas from many
categories of specifications, such as template-based and temporal properties,
increases the potential applicability of this work to multiple subareas on
synthesis research.

%\section{Future Work and Conclusion}
%\label{sec:futurework}

%The meaningful results of our work so far on the synthesis from Assume-Guarantee
%contracts have also provided a solid ground towards extending and improving the
%involved algorithms in the future. A particularly important milestone is to
In future work, we plan to exploit the connections of our approach with
%eventually switch to a more efficient algorithm, where we endorse the core
Property Directed Reachability~\cite{komuravelli2014smt} more closely. %~\cite{een2011efficient,bradley11}.
%using efficient ways to further enhance the algorithm's performance through the
%use of implicit abstractions~\cite{cimatti2014ic3} to further reduce the search
%space of the algorithm.
%This will also help our original work on realizability checking, by improving the unsoundness of our unrealizable results. 
Another promising idea here is the use of Inductive Validity Cores (IVCs)~\cite{Ghass16}, whose main purpose is to effectively pinpoint the absolutely necessary model elements in a generated proof. We can potentially use
the information provided by IVCs as a preprocessing tool to reduce the size of
the original specification, and hopefully the complexity of the realizability
proof. Finally, various optimizations can be implemented on top of both
\aeval and \textsc{SMTLib2C} to produce smaller implementations.
\vspace{-.5em}
%Another important goal is that of supporting additional theories, and primarily
%LIA, which is currently not fully supported by \aeval's model based projection
%technique. A potential optimization that could effectively reduce the
%algorithm's complexity is the further simplification of the transition relation that we are currently using, by reducing its complicated form through the
%mapping of common subexpressions on different conditional
%branches. This will also have a direct impact on the Skolem relations retrieved
%by \aeval, reducing their individual size and improving, thus, the final
%implementation in terms of readability as well as its usability as an
%intermediate representation to the preferred target language. Finally, it is
%worthy to mention the capability to extend the algorithm to ``block''
%synthesized implementations and thus receive alternative, but equivalent
%programs. Whether this is an important or necessary feature in verification in
%general, is a subject that could be explored extensively in the future.

%\section*{Acknowledgments}
%\grigory{We don't need this for the submission. Should be put back once the paper is accepted.}
%This work was funded by DARPA and AFRL under contract 4504789784 (Secure Mathematically-Assured Composition of Control Models), and by NASA under contract NNA13AA21C (Compositional Verification of Flight Critical Systems), and by NSF under grant CNS-1035715 (Assuring the safety, security, and reliability of medical device cyber physical systems).
\bibliography{document}

\begin{thebibliography}{10}
\providecommand{\url}[1]{\texttt{#1}}
\providecommand{\urlprefix}{URL }

\bibitem{Aziz95}
Aziz, A., Balarin, F., Braton, R., Sangiovanni-Vincentelli, A.: Sequential
  {S}ynthesis using {SIS}. ICCAD pp. 612--617 (1995)

\bibitem{beyene2014constraint}
Beyene, T., Chaudhuri, S., Popeea, C., Rybalchenko, A.: A constraint-based
  approach to solving games on infinite graphs. In: ACM SIGPLAN Notices.
  vol.~49, pp. 221--233. ACM (2014)

\bibitem{Bohy12}
Bohy, A., Bruy{\`{e}}re, V., Filiot, E., Jin, N., Raskin, J.: Acacia+, a tool
  for {LTL} {S}ynthesis. In: CAV. pp. 652--657 (2012)

\bibitem{Chatterjee07}
Chatterjee, K., Henzinger, T.A.: Assume-{G}uarantee {S}ynthesis. TACAS pp.
  261--275 (2007)

\bibitem{cheng2016structural}
Cheng, C.H., Hamza, Y., Ruess, H.: Structural synthesis for gxw specifications.
  In: CAV. pp. 95--117. Springer (2016)

\bibitem{NFM2012:CoGaMiWhLaLu}
Cofer, D.D., Gacek, A., Miller, S.P., Whalen, M.W., LaValley, B., Sha, L.:
  Compositional verification of architectural models. In: NFM. pp. 126--140.
  Springer-Verlag (2012)

\bibitem{DeMoura08:z3}
De~Moura, L., Bj{\o}rner, N.: {Z3}: An efficient {SMT} solver. In: TACAS, pp.
  337--340. Springer (2008)

\bibitem{ehlers2010symbolic}
Ehlers, R.: Symbolic bounded synthesis. In: CAV. pp. 365--379. Springer (2010)

\bibitem{fedyukovich2015automated}
Fedyukovich, G., Gurfinkel, A., Sharygina, N.: Automated discovery of
  simulation between programs. In: LPAR. pp. 606--621. Springer (2015)

\bibitem{flener2001inductive}
Flener, P., Partridge, D.: Inductive programming. ASE  8(2),  131--137 (2001)

\bibitem{jkind}
Gacek, A.: {JK}ind -- an infinite-state model checker for safety properties in
  {Lustre}. \url{http://loonwerks.com/tools/jkind.html} (2016)

\bibitem{Katis15:Realizability}
Gacek, A., Katis, A., Whalen, M.W., Backes, J., Cofer, D.: Towards
  realizability checking of contracts using theories. In: NFM, pp. 173--187.
  Springer (2015)

\bibitem{Ghass16}
Ghassabani, E., Gacek, A., Whalen, M.W.: Efficient generation of inductive
  validity cores for safety properties. In: FSE (2016)

\bibitem{gulwani2010dimensions}
Gulwani, S.: Dimensions in program synthesis. In: Proceedings of the 12th
  international ACM SIGPLAN symposium on Principles and practice of declarative
  programming. pp. 13--24. ACM (2010)

\bibitem{Hamza10}
Hamza, J., Jobstmann, B., Kuncak, V.: Synthesis for {R}egular {S}pecifications
  over {U}nbounded {D}omains. FMCAD pp. 101--109 (2010)

\bibitem{lustrev6}
Jahier, E., Raymond, P., Halbwachs, N.: {The Lustre V6 Reference Manual},
  \url{http://www-verimag.imag.fr/Lustre-V6.html}

\bibitem{jiang2009interpolating}
Jiang, J.H.R., Lin, H.P., Hung, W.L.: Interpolating functions from large
  boolean relations. In: ICCAD. pp. 779--784. IEEE (2009)

\bibitem{katis2015machine}
Katis, A., Gacek, A., Whalen, M.W.: Machine-checked proofs for realizability
  checking algorithms. In: VSTTE. pp. 110--123. Springer (2015)

\bibitem{katis2016towards}
Katis, A., Gacek, A., Whalen, M.W.: Towards synthesis from assume-guarantee
  contracts involving infinite theories: a preliminary report. In: FormaliSE.
  pp. 36--41. ACM (2016)

\bibitem{Klein10}
Klein, U., Pnueli, A.: Revisiting {S}ynthesis of {GR}(1) {S}pecifications. HVC
  pp. 161--181 (2010)

\bibitem{komuravelli2014smt}
Komuravelli, A., Gurfinkel, A., Chaki, S.: Smt-based model checking for
  recursive programs. In: CAV. vol. 8559, pp. 17--34. Springer (2014)

\bibitem{kuncak2013functional}
Kuncak, V., Mayer, M., Piskac, R., Suter, P.: Functional synthesis for linear
  arithmetic and sets. International Journal on Software Tools for Technology
  Transfer  15(5-6),  455--474 (2013)

\bibitem{monmege2016real}
Monmege, B., Brihaye, T., Esti{\'e}venart, M., Ho, H.M., ULB, G.G., Sznajder,
  N.: Real-time synthesis is hard! In: FORMATS. vol. 9884, p. 105. Springer
  (2016)

\bibitem{DBLP:conf/popl/PnueliR89}
Pnueli, A., Rosner, R.: On the synthesis of a reactive module. In: POPL. pp.
  179--190. {ACM} Press (1989)

\bibitem{reynoldscounterexample}
Reynolds, A., Deters, M., Kuncak, V., Tinelli, C., Barrett, C.:
  Counterexample-guided quantifier instantiation for synthesis in smt. In: CAV.
  vol. 9207, pp. 198--216. Springer (2015)

\bibitem{ryzhykdeveloping}
Ryzhyk, L., Walker, A.: Developing a practical reactive synthesis tool:
  Experience and lessons learned., to appear at SYNT 2016

\bibitem{srivastava2013template}
Srivastava, S., Gulwani, S., Foster, J.S.: Template-based program verification
  and program synthesis. International Journal on Software Tools for Technology
  Transfer  15(5-6),  497--518 (2013)

\bibitem{Tini03}
Tini, S., Maggiolo-Schettini, A.: Compositional {S}ynthesis of {G}eneralized
  {M}ealy {M}achines. Fundamenta Informaticae  60(1-4),  367--382 (2003)

\bibitem{walker2014predicate}
Walker, A., Ryzhyk, L.: Predicate abstraction for reactive synthesis. In:
  FMCAD. pp. 219--226. IEEE (2014)

\end{thebibliography}
\bibliographystyle{splncs03}

\newpage
\appendix
\section{Example in more detail}
\label{app:ex}

Here we consider our example from Fig.~\ref{fg:example} and demonstrate one iteration of the synthesis procedure.
%
%Base check:
%
%\begin{equation}
%\begin{aligned}
%  bias_0 & =0 \land \\
%  bias\_max_0 &= false \land \\
%  guarantee_{10} &=((state_0=0)\implies(bias_0=0)) \land \\
%  guarantee_{20} &=true  \land \\
%  guarantee_{30}&=true  \land \\
%  guarantee_{40} &=(bias\_max_0\implies(state_0=3))  \land \\
%  guarantee\_all_0&=(guarantee_{10} \land guarantee_{20} \land guarantee_{30} \land guarantee_{40}) \implies \\
%  %
%  \exists bias_2, bias\_max_2, guarantee_{12}, state_2, &guarantee_{22}, guarantee_{32}, guarantee_{42}, guarantee_2 \such \\
%  bias_2&=0 \land \\
%  bias\_max_2&=false \land \\ 
%  guarantee_{12}&=((state_2=0) \implies (bias_2=0)) \land \\
%  guarantee_{22}&=true \land \\
%  guarantee_{32}&=true \land \\
%  guarantee_{42}&=(bias\_max_2\implies(state_2=3)) \land \\
%  guarantee\_all_2 &= (guarantee_{12} \land guarantee_{22} \land guarantee_{32} \land guarantee_{42} )\land \\
%  guarantee\_all_2 &\notag
%\end{aligned}
%\end{equation}
%
In particular the $\forall\exists$-formula of $\mathit{ExtendCheck}$ is as follows:%
\begin{equation}
\begin{aligned}
    &\quad {}(x_0 = 0 \lor x_0 = 1) \land \\
    bias_0&=ite(init, 0, ite(x_0= 1, 1, -1)+bias_{-1}) \land \\
    bias\_max_0&=ite(init, false, ((bias_0\ge 2)\lor (bias_0\le -2))\lor bias\_max_{-1})\land \\
    guarantee_{10}&=((state_0=0)\implies(bias_0=0))\land \\
    guarantee_{20}&=ite(init, true, ((state_{-1}=0)\land x_0 = 1)\implies(state_0=2))\land \\
    guarantee_{30}&=ite(init, true, ((state_{-1}=0)\land x_0 = 0)\implies(state_0=1))\land \\
    guarantee_{40}&=(bias\_max_0\implies(state_0=3))\land \\
    guarantee_{50}&=((state_0=0) \lor (state_0=1) \lor (state_0=2) \lor (state_0=3))\land \\
    guarantee\_all_0&=(guarantee_{10}\land guarantee_{20}\land
    guarantee_{30}\land guarantee_{40} \land \\ &\qquad {} guarantee_{50}) 
    \land guarantee\_all_0 \land \\
    bias_1&=ite(false, 0, ite(x_0= 1, 1, -1)+bias_0)\land \\
    bias\_max_1&=ite(false, false, ((bias_1\ge 2)\lor (bias_1\le-2))\lor bias\_max_0)\land \\
    guarantee_{11}&=((state_1=0)\implies(bias_1=0))\land \\
    guarantee_{21}&=ite(false, true, ((state_0=0)\land x_0 = 1)\implies(state_1=2))\land \\
    guarantee_{31}&=ite(false, true, ((state_0=0)\land x_0 = 0)\implies(state_1=1))\land \\
    guarantee_{41}&=(bias\_max_1\implies(state_1=3))\land \\
    guarantee_{51}&=((state_1=0) \lor (state_1=1) \lor (state_1=2) \lor (state_1=3))\land \\
    guarantee\_all_1&=(guarantee_{11}\land guarantee_{21}\land guarantee_{31}\land guarantee_{41} \land guarantee_{51})
\implies \\ &\qquad {}
\exists bias_2, bias\_max_2, guarantee_{12}, state_2, guarantee_{22},
guarantee_{32}, \\ &\qquad {} guarantee_{42}, guarantee_{52}, guarantee\_all_2
\such  \\
    &\quad {}(x_1 = 0 \lor x_1 = 1) \land \\
  bias_2&=ite(false, 0, ite(x_1= 1, 1, -1)+bias_0) \land\\
  bias\_max_2&=ite(false, false, ((bias_2\ge 2) \lor (bias_2\le -2)) \lor bias\_max_0) \land\\
  guarantee_{12}&=((state_2=0)\implies (bias_2=0)) \land\\
  guarantee_{22}&=ite(false, true, ((state_0=0)\land x_1 = 1)\implies (state_2=2)) \land\\
  guarantee_{32}&=ite(false, true, ((state_0=0)\land x_1 = 0)\implies (state_2=1)) \land\\
  guarantee_{42}&=(bias\_max_2\implies (state_2=2)) \land\\
  guarantee_{52}&=((state_2=0) \lor (state_2=1) \lor (state_2=2) \lor (state_2=3))\land \\
  guarantee\_all_2&=(guarantee_{12}\land guarantee_{22}\land guarantee_{32}\land guarantee_{42} \land guarantee_{52})\land \\
  &\qquad {} guarantee\_all_2&\notag
\end{aligned}
\end{equation}

\aeval proceeds by constructing $\mbp$-s and creating local Skolem functions. 
In one of the iterations, it obtains the following $\mbp$:%
\begin{equation}
\begin{aligned}
  (x_1 = 1\land -1=bias_0) &\lor (x_1=0 \land 1=bias_0) \land \\
  \neg bias\_&max_0 \land \\
  (\neg (state_0=0))\lor  x_1 = 0 &\land
  (\neg (state_0=0))\lor x_1 =1  \notag
\end{aligned}
\end{equation}

and the following local Skolem function:%
\begin{align*}
  state_2&=0 \land&
  bias_2&=0 \land&
  bias\_max_2&=0
\end{align*}

In other words, the pair of the $\mbp$ and the local Skolem function is the synthesized implementation for some transitions of the automaton: the $\mbp$ specifies the source state, and the Skolem function specifies the destination state.
From this example, it is clear that the synthesized transitions are from state 1 to state 0, and from state 2 to state 0.
These $\mbp$ and the local Skolem are further encoded into the snippet of C code that after slight simplifications looks as follows:

\begin{figure}[h]
\begin{lstlisting}[language=C]{Name=test2}
...
if (((x[1] == 1 && (-1 == bias[0])) || (x[1] == 0&& (1 == bias[0])))
     && !bias_max[0] 
     && (state[0] != 0 || x[1] == 0) && (!state[0] != 0 || x[1] == 1)) {
  bias_max[1] = 0;
  bias[1] = 0;
  state[1] = 0;
}
...
\end{lstlisting}
\end{figure}
\end{document}